\newcommand{\uppp}[1]{\raisebox{2ex}[0pt]{#1}}
\newcolumntype{C}[1]{>{\centering\let\newline\\\arraybackslash\hspace{0pt}}m{#1}}
\newcolumntype{H}{>{\setbox0=\hbox\bgroup}c<{\egroup}@{}}
\newtheorem{proposition}{Proposition}
\begin{document}
\large

\title{Mathematical models and search algorithms for the capacitated $p$-center problem}

\author{\textbf{Raphael Kramer, Manuel Iori} \\
Dipartimento di Scienze e Metodi dell'Ingegneria \\ Universit{\`a} degli Studi di Modena e Reggio Emilia, Italy \\ 
\{raphael.kramer, manuel.iori\}@unimore.it
\and
\textbf{Thibaut Vidal} \\
Departamento de Inform{\'a}tica \\ Pontif{\'i}cia Universidade Cat{\'o}lica do Rio de Janeiro, Brazil \\
vidalt@inf.puc-rio.br§
}
\date{}

\maketitle

\vspace{-0.5cm}
\begin{center}
Technical Report -- March 2018 \\
\end{center}
\vspace{0.5cm}

\begin{abstract}

The capacitated $p$-center problem requires to select $p$ facilities from a set of candidates to service a number of customers, subject to facility capacity constraints, with the aim of minimizing the maximum distance between a customer and its associated facility. The problem is well known in the field of facility location, because of the many applications that it can model. In this paper, we solve it by means of search algorithms that iteratively seek the optimal distance by solving tailored subproblems. 
We present different mathematical formulations for the subproblems and improve them by means {of several valid inequalities, including an effective one based on a 0-1 disjunction and the solution of subset sum problems.
We also develop an alternative search strategy that finds a balance between the traditional sequential search and binary search. This strategy limits the number of feasible subproblems to be solved and, at the same time, avoids large overestimates of the solution value, which are detrimental for the search. We evaluate the proposed techniques by means of extensive computational experiments on benchmark instances from the literature and new larger test sets. All instances from the literature with up to 402 vertices and integer distances are solved to proven optimality, including 13 open cases, and feasible solutions are found in 10 minutes for instances with up to 3038 vertices. }

\end{abstract}

\onehalfspace

\section{Introduction} 

Facility location problems play an important role in the operations research and combinatorial optimization literature. Since the early years of these disciplines, considerable research effort has been dedicated to their solution, due to their importance in supply chain management, healthcare, telecommunications, humanitarian relief, and machine learning applications, among many others (see, e.g., \citealt{DreznerHamacher2002} and \citealt{Laporte2015a}).

The \textit{$p$-center problem} (PCP) is a particular facility location problem that requires to select $p$ facilities, from a set of candidates, to serve a set of customers in such a way that the maximum distance between a customer and its closest facility is minimized (see, e.g., \citealt{Hakimi1964} and \citealt{Minieka1970}). Two problem variants are usually distinguished: the \emph{absolute} PCP, where the facilities can be located on the vertices or on the edges of a graph, and the \emph{vertex} PCP, considered here, in which the facilities can be located only on the vertices (see, e.g., \citealt{KarivHakimi1979}).
The main applications of the PCP arise in the location of emergency facilities, such as fire stations, police stations, and ambulance waiting locations, in a context where the worst-case service level, e.g., 
service time or distance, must be bounded or minimized (see, e.g., \citealt{Daskin1995ch5} and \citealt{MarianovRevelle1995}).
The PCP is known to be $\mathcal{NP}$-hard \citep{KarivHakimi1979,Masuyama1981}, and thus several heuristics have been proposed for its solution, making use of paradigms such as tabu search \citep{Mladenovicetal2003}, bee colony optimization \citep{Davidovic2011}, variable neighborhood search \citep{Irawan2015}, and other techniques (\citealt{Hochbaum1985}, \citealt{Plesnik1987}, \citealt{Mihelic2003}, and \citealt{Davoodi2011}). From the perspective of exact PCP methods, the most successful type of approach, to the best of our knowledge, consists in solving a series of covering subproblems with the help of preprocessing and reduction techniques (\citealt{Daskin2000}, \citealt{Ilhanetal2002}, \citealt{Elloumi2004}, \citealt{Chen2009}, and \citealt{CalikTansel2013}).

In this paper, we focus on the capacitated version of the vertex PCP, in which each customer is characterized by a demand and each candidate facility by a capacity. Each demand must be serviced integrally by one of the $p$ chosen facilities, without exceeding their capacities. This feature renders the problem more challenging: a customer may no longer be assigned to its closest facility, and thus the complete characterization of a solution requires both facility opening and customer-facility allocation decisions.

To solve the problem, we propose a decomposition algorithm that searches for the optimal distance by iteratively solving a capacitated set covering formulation.
The decomposition is enhanced by devising tailored search algorithms as well as valid inequalities and symmetry breaking rules for the set covering subproblems. 
We also introduce an alternative arc-flow formulation, which is shown to provide good linear-relaxation bounds. Computational experiments conducted on a large set of benchmark instances show that our techniques help to find and prove optimal solutions. In particular, we provide the optimal solutions for all the benchmark instances with up to 402 vertices considered in the literature, including 13 open cases. Finally, to better evaluate the performance of the proposed algorithms, we introduce and solve new instances containing between 50 and 3000 vertices.

The remainder of this paper is organized as follows. 
Section \ref{sec:descrip} formally describes the problem and reviews the related literature.
Section \ref{sec:formulations} presents the mathematical formulations and the decomposition-based approach.
Section \ref{sec:improvements} introduces improvement methods, whereas Section~\ref{sec:method} presents classical and alternative decomposition-based search algorithms. Section \ref{sec:results} reports our computational experiments and Section~\ref{sec:conclusion} concludes.

\FloatBarrier
\section{Problem Description and Literature Review} \label{sec:descrip}

The \textit{capacitated $p$-center problem} (CPCP) is defined as follows. Let $G=(F, C, E)$ be a bipartite graph, where $F=\{1,\dots,m\}$ and $C=\{1,\dots,n\}$ are the sets of candidate locations and customer nodes, respectively, and $E = \{(i,j) : i \in F, j \in C\}$ is the set of edges. Each edge $(i,j) \in E$ represents a possible assignment of a customer $j$ to a facility $i$ and has a non-negative distance $d_{ij}$.
{Each facility $i \in F$ has a capacity $Q_i$, which can be used to supply the demands $q_j$ of customers $j \in C$. Unless stated otherwise, we assume in the following that the input parameters are integer.}
The CPCP aims at opening at most $p$ facilities and assigning each customer to exactly one facility, seeking to minimize the maximum distance between a customer and its facility, and ensuring that the total demand of the customers assigned to each facility does not exceed its capacity.

The CPCP is closely related to the \textit{capacitated p-median problem} (CPMP), the latter differing only in the objective function, which minimizes the \emph{sum} of all the customer-facility assignments. The literature on the CPMP is abundant, especially for heuristic methods (see, e.g., \citealp{Reese2006}). In terms of exact algorithms, we highlight the column generation approach of \citet{LorenaSenne2004} and the branch-and-price algorithm of \citet{Ceselli2005}. In the former article, the authors formulate the restricted master problem as a set covering model, solve $m$ binary knapsack subproblems to generate the columns, and use a Lagrangian/surrogate relaxation to speed up the convergence. In the latter article, the authors use a similar column generation approach to obtain the linear relaxation for the nodes of a branch-and-bound algorithm. {Branching is performed first on location variables, and then on assignment variables. Their best algorithm} has good performance on small instances and on instances with a large {$n/p$} ratio. The benchmark sets introduced in these articles are now commonly used to evaluate solution procedures for the CPCP. In addition to these works, \cite{Boccia2008} proposed a cutting plane algorithm based on Fenchel cuts (see \citealt{Boyd1993,Boyd1994}), which improved the gaps provided by the previously cited works and led to new proven optimal solutions. For what concerns exact approaches for the uncapacitated version of the $p$-median problem, we point out the column-and-row generation method by \citet{Garcia2011}, and the branch decomposition algorithm by \citet{Fast2017}.

In contrast with the CPMP, the literature on the CPCP is limited. Concerning exact methods, \cite{Jaeger1994} presented a polynomial algorithm for the special case in which assignments among customers and facilities are organized on a tree network and capacities are identical. {\citet{Ozsoy2006} proposed binary search algorithms that iteratively solve either a capacitated concentrator location problem or a variant of the bin packing problem. Their computational experiments demonstrate that these search algorithms provide better results than the solution of a compact formulation of the problem.}
The idea of decomposing the original problem into smaller subproblems was also adopted by \citet{AlbaredaSambola2010}. They considered two subproblems: a capacitated concentrator location problem, and a capacitated maximal covering problem. Moreover, instead of directly solving the subproblem formulations via available \textit{mixed integer linear programming} (MILP) solvers, they performed a Lagrangian relaxation of the assignment constraints to improve the lower bounds, and they applied heuristic procedures in the inner iterations of a subgradient optimization to generate feasible solutions. The resulting algorithm had good convergence and improved the results of \citet{Ozsoy2006} in most cases.

Concerning heuristic approaches, \citet{Scaparraetal2004} proposed a very large neighborhood search, relying on flow-based algorithms to efficiently detect improving neighbors, while \citet{QuevedoOrozco2015} addressed the problem with an iterated greedy local search with variable neighborhood descent. We also mention the work of \cite{Espejo2015}, who proposed a CPCP variant in which the maximum distance to the second-closest center is minimized. The problem is relevant in situations where the facilities can become unavailable because of unforeseen events such as natural disaster or a labor strike. Mathematical formulations, heuristics, and preprocessing procedures were proposed and experimentally evaluated.

\FloatBarrier
\section{Mathematical Formulations and Decomposition Approach} \label{sec:formulations}

In this section, we provide two formulations for the CPCP and then describe the relationship between the CPCP and the capacitated set covering problem, which is the foundation of our search algorithms. 
We need some additional notation. Let $D = [d_{ij}]$ define the cost matrix and $r \in D$ denote a \emph{coverage radius}. In our search algorithms, $r$ represents the maximum distance allowed for a customer-facility assignment. Moreover, the circular area having radius $r$ and center in facility $i$ is called the \emph{coverage area} of $i$.
Let $C^r_i = \{j \in C: d_{ij} \leq r \text{ and } q_j \leq Q_i\}$ be the set of customers {lying in the coverage area of facility~$i$ induced by $r$}, and $F^r_j = \{i \in F: d_{ij} \leq r \text{ and } q_j \leq Q_i\}$ be the set of facilities that can cover customer $j$ within distance $r$.

\FloatBarrier
\subsection{Descriptive formulation} \label{sec:desc-formulation}

{Let $y_i$ be a binary variable taking value 1 if facility $i \in F$ is open and 0 otherwise, and $x_{ij}$ a binary variable stating whether or not customer $j \in C$ is assigned to facility $i \in F$. Let $z$ be a non-negative variable that keeps track of the maximum distance over all customer-facility assignments. The CPCP can be modeled with the following {\em descriptive} formulation:}
\allowdisplaybreaks
\begin{align}
\text{(CPCP-D) } \min z \label{CpCP:FO} \\
\text{s.t.}  \sum_{i \in F} x_{ij} = 1 & \qquad j \in C, \label{CpCP:const1} \\
 \sum_{i \in F} y_{i} \leq p & \qquad \label{CpCP:const3} \\
 z \geq \sum_{i \in F} d_{ij} x_{ij} & \qquad j \in C, \label{CpCP:const4} \\
 \sum_{j \in C} q_{j} x_{ij} \leq Q_{i} y_{i} & \qquad i \in F, \label{CpCP:const5} \\
 x_{ij} \in \{0,1\} & \qquad i \in F, j \in C, \label{CpCP:const6} \\
 y_{i} \in \{0,1\} & \qquad i \in F. \label{CpCP:const7}
\end{align}

Objective function \eqref{CpCP:FO} minimizes the maximum distance. Constraints \eqref{CpCP:const1} ensure that each customer is assigned to one facility.
The total number of open facilities is limited to $p$ by constraint~\eqref{CpCP:const3}.
Constraints \eqref{CpCP:const4} force $z$ to be greater than or equal to the distance from any customer to its assigned facility.
Constraints \eqref{CpCP:const5} ensure that the sum of demands assigned to an open facility does not exceed its capacity, and 
constraints \eqref{CpCP:const6} and \eqref{CpCP:const7} provide the binary conditions. 

\FloatBarrier
\subsection{Extended arc-flow formulation} \label{sec:arc-flow}

Arc-flow formulations model a problem by using a capacitated network of pseudo-poly\-no\-mial size. Successful {formulations of this type have been presented for a number of combinatorial optimization problems (see, e.g., the works on bin packing and cutting stock by \citealt{ValeriodeCarvalho2002} and \citealt{DIM15})}, but, to the best of our knowledge, this is the first time they have been applied to  the CPCP. To construct a CPCP arc-flow formulation, first of all we associate {with each $i \in F$ an acyclic directed multigraph $G_i = (V_i, A_i)$, with $V_i=\{0,1,\dots,Q_i\}$ and $A_i=\{(e,f,j): e, f \in V_i \mbox{ and }j \in C \cup \{0\}\}$}. 
The vertices represent partial fillings of the facility capacity, whereas the arcs have a double meaning: {\em customer arcs} represent the assignment of a customer {to a facility}, whereas  {\em loss arcs} represent the unused residual capacity of the facility. Formally, for any $i \in F$ we partition $A_i = \cup_{j \in C \cup \{0\}} A_{ij}$, where $A_{i0} = \{(d,Q_i,0): d \in V_i \setminus Q_i\}$ is the set of loss arcs, and $A_{ij} = \{(d,d+q_j,j): d, d+q_j \in V_i \}$ are the sets of customer arcs for all $j \in C$.

A valid assignment of customers to {a facility} corresponds to a path $\mathcal{P} \subseteq G_i$ containing one or more customer arcs and at most one loss arc. If arc $(d,e,j) \in \mathcal{P}$, then either customer $j \in C$ (with demand $q_j = e-d$) is served by $i$, or a residual capacity of $e-d$ $(=Q_i-d)$ units is unused. The aim of the arc-flow formulation is to open facilities and assign a path to each open facility by minimizing the maximum assignment distance and ensuring that all customers are served. 
Let $A=\cup_{i \in F} A_i$ be the set of all arcs, $a$ be the index of a generic arc in $A$, $\delta_{ij}^{+}(e)$ be the subset of arcs $a \in A_{ij}$ exiting from node $e$, and $\delta_{ij}^{-}(e)$ be the subset of arcs $a \in A_{ij}$ entering node $e$. Let $f_{aij}$ be a binary variable that takes value 1 if arc $a \in A_{ij}$ is selected and 0 otherwise. The CPCP can be modeled as
\allowdisplaybreaks
\begin{align}
\text{(CPCP-AF) } \min z \nonumber \\
\text{s.t. \eqref{CpCP:const1}--\eqref{CpCP:const4}, \eqref{CpCP:const6}, \eqref{CpCP:const7}, and} \nonumber \\
 \sum_{j \in C} \sum_{a \in \delta_{ij}^{-}(e)} f_{aij} - 
       \sum_{j \in C} \sum_{a \in \delta_{ij}^{+}(e)} f_{aij} = \left\{
{\begin{tabular}{@{\hspace{0.0cm}}r@{\hspace{0.01cm}}l@{\hspace{0.01cm}}}
\vspace{0.2cm}$-y_i$  & $\text{ if } e = 0$ \\
\vspace{0.2cm}$y_i$ & $ \text{ if } e = Q_i$ \\
\vspace{0.0cm}$0$    & $ \text{ otherwise}$
\end{tabular}} \right. & \qquad i \in F, e \in V_i, \label{eq:AFinit}\\
 \sum_{e \in V_i} \sum_{a \in \delta_{ij}^{+}(e)} f_{aij} = x_{ij} & \qquad i \in F, j \in C, \label{eq:AF2}\\ 
 f_{aij} \in \{0,1\} & \qquad i \in F, j \in C, a \in A_{ij}. \label{eq:AFend}
\end{align}

Constraints \eqref{eq:AFinit} impose the flow conservation at the nodes $V_i$ for all the facilities. Constraints \eqref{eq:AF2} ensure that the number of arcs associated with customer $j$ in facility $i$ is equal to  $x_{ij}$ (thus being either 1 or 0). Note that CPCP-AF is an extended formulation of CPCP-D, and practical CPCP-AF solutions may be obtained by replacing $x_{ij}$ with $\sum_{e \in V_i} \sum_{a \in \delta_{ij}^{+}(e)} f_{aij}$ in constraints \eqref{CpCP:const1} and \eqref{CpCP:const4} and removing constraints \eqref{CpCP:const6} and \eqref{eq:AF2}.

This formulation provides good lower bounds. However, due to its pseudo-polynomial number of variables ($\mathcal{O}(Q_i|C|)$ for each $i$), it becomes impracticable for large instances. Hence, we introduce reduction techniques. First, one can use an upper bound $r$, i.e., a valid CPCP coverage radius, and remove the assignment variables associated with distances greater than $r$.
Secondly, it is important to know which customers are served by a facility, but not their sequence. Thus, one can sort the customers according to an arbitrary criterion and so reduce the number of arcs. 
Similarly to \cite{ValeriodeCarvalho2002}, for each facility $i$, we sort the customers in $C^r_i$ by non-increasing $q_j$ value. Then, we build the arcs by considering this order (arcs associated with the first customer can only start at 0, arcs associated with the second customer can start either at 0 or right after the arcs of the first customer, and so on). This was done by a standard dynamic programming algorithm.

\FloatBarrier
\subsection{Decomposition Approach}\label{subsec:decomposition}

The objective function of the CPCP minimizes the maximum distance among the selected customer-facility assignments. The optimal solution value is therefore contained in the distance matrix $D$, and one can test with a search algorithm whether a solution value~{$r$} is optimal. More precisely, testing whether the CPCP admits a solution with optimal value $z \leq r$ is equivalent to solving a feasibility test where all the capacity constraints are satisfied and only customer-facility assignments of distance {$d_{ij} \leq r$} are used. 
On the basis of preliminary experimental analyses, we decided to transform this feasibility test into the problem of minimizing the number of selected facilities that cover all the customers while satisfying the capacity constraints. Formally, this corresponds to the following \textit{capacitated set covering problem} (CSCP-$r$):
\allowdisplaybreaks
\begin{align}
\text{(CSCP-$r$) } \min \sum_{i \in F} y_i \label{eq:CSCPinit} \\
\text{s.t.} \sum_{i \in F_j^r} x_{ij} \geq 1 & \qquad j \in C, \label{eq:CSCP2} \\
 \sum_{j \in C_i^r} q_j x_{ij} \leq Q_i y_i & \qquad i \in F, \label{eq:CSCP5} \\
 x_{ij} \in \{0,1\} & \qquad i \in F, j \in C_i^r, \label{eq:CSCP6} \\
 y_{i} \in \{0,1\} & \qquad i \in F. \label{eq:CSCPend}
\end{align}

Objective function \eqref{eq:CSCPinit} minimizes the number of selected facilities. Constraints~\eqref{eq:CSCP2} ensure that each customer $j \in C$ is assigned to at least one facility, whereas constraints~\eqref{eq:CSCP5} impose capacity restrictions on all the facilities. Note that in constraints \eqref{eq:CSCP2}, we opted to replace the ``$=$'' (originally used in constraints \eqref{CpCP:const1}) with a ``$\geq$''. {This can be done without loss of optimality and allows us to develop improvement methods (given in Section~\ref{sec:improvements} below)}. Clearly, a solution in which a customer is entirely assigned to more than one facility can be mapped to a solution with the same cost, in which such the customer is assigned to a single facility. Note that an extended arc-flow formulation for CSCP-$r$, denoted CSCP-AF-$r$, can also be obtained by replacing constraints \eqref{eq:CSCP5} with \eqref{eq:AFinit}--\eqref{eq:AFend} and removing the assignment variables associated with distances $d_{ij} > r$.

Based on this decomposition, the success of the approach now depends on two factors: 1) an efficient solution of each CSCP-$r$, and 2) an efficient search for the optimal value $r \in D$, with the smallest total computational effort. {Note that the smallest computational effort is not necessarily proportional to the number of calls to the CSCP-$r$ subproblems, since some subproblems are simpler to solve (e.g., when $r$ is small) than others. In fact, there can be large differences in the computational effort required to either find a feasible solution or prove that no feasible solution exists for a given radius}. To address these two challenges, the next two sections present techniques that improve the solution of the subproblems (Section \ref{sec:improvements}) and search strategies to find the optimal coverage radius (Section \ref{sec:method}).

\FloatBarrier
\section{Improvement Methods for the CSCP-$r$} \label{sec:improvements}

To ease the description, let $\Delta_{i}(r) = \sum_{j \in C_{i}^r} q_j$ be the sum of the demands of the customers located inside the {coverage area} defined by facility $i$ and {coverage radius} $r$, and let $K_i(r) = \min\{\Delta_{i}(r), Q_{i}\}$ be an upper bound on the maximum demand that can be satisfied by facility~$i$ within {radius} $r$.

{First, we report the classical inequalities used to improve the linear relaxation of the CSCP-$r$:}
\begin{align}
\qquad x_{ij} \leq y_{i} & & i \in F, j \in C_i^r. \label{eq:CSCP3}
\end{align}

In addition, we use several techniques to improve {the CSCP-$r$ relaxation value} and reduce the number of nodes explored by the MILP branch-and-bound. 
{To illustrate these techniques, we consider a small instance with five vertices, each representing both a customer and a candidate location for a facility (i.e., $C=F$), with demands $q=[3, 5, 4, 1, 1]$, capacities $Q=[10, 15, 5, 15, 10]$, and positioned as illustrated in Figure \ref{fig:exampleIneq} at the end of this section.} The $d_{ij}$ values are assumed to be proportional to the Euclidean distances in the figure.\\

\noindent
\textbf{Facility domination inequalities.} Let $i_1$ and $i_2$ be two facilities with $C_{i_2}^r \subseteq C_{i_1}^r$ and $Q_{i_1} \geq K_{i_2}(r)$. Under these conditions, $i_1$ can be preferred to $i_2$ because {it can serve} a larger or equivalent set of customers. 
Consequently, an optimal solution that includes $i_2$ but not $i_1$ can always be transformed into an equivalent solution that includes $i_1$ but not $i_2$. We say that $i_1$ {\em dominates}~$i_2$. We thus obtain the following result:
\begin{proposition} \label{proposition:FDI}
The following inequalities are valid for the CSCP-$r$:
\begin{align}
y_{i_2} \leq y_{i_1} & \qquad i_1, i_2 \in F : C_{i_2}^r \subseteq C_{i_1}^r \text{ and } Q_{i_1} \geq K_{i_2}(r). \label{eq:ineq1}
\end{align}
\end{proposition}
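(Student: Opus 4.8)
The plan is to establish validity by an exchange argument: I will show that every feasible CSCP-$r$ solution can be converted, without increasing the number of open facilities, into one that satisfies $y_{i_2} \le y_{i_1}$. Since the objective is non-increasing along this transformation, any optimal solution is mapped to an equally good one obeying the inequality, which is exactly the notion of validity used here. The inequality is only restrictive when $y_{i_2}=1$ and $y_{i_1}=0$, so I would fix a feasible solution $(x,y)$ with $y_{i_2}=1$ and $y_{i_1}=0$ and treat this as the only case to handle; in every other case the inequality already holds.

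In that case I would build a new solution $(x',y')$ by a single swap: set $y'_{i_1}=1$ and $y'_{i_2}=0$, leave all other $y$-variables unchanged, and reassign to $i_1$ every customer currently served by $i_2$. Concretely, for each $j$ with $x_{i_2 j}=1$ I set $x'_{i_1 j}=1$ and $x'_{i_2 j}=0$, keeping every other assignment variable as in $(x,y)$. First I would verify the coverage constraints \eqref{eq:CSCP2}: any such $j$ satisfies $x_{i_2 j}=1$, hence $j \in C_{i_2}^r$, and since $C_{i_2}^r \subseteq C_{i_1}^r$ the variable $x_{i_1 j}$ exists and covers $j$; all other customers keep their coverage, so \eqref{eq:CSCP2} still holds. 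The capacity constraint \eqref{eq:CSCP5} at $i_2$ is trivially satisfied because $i_2$ now serves no one and $y'_{i_2}=0$. Finally, the objective value $\sum_i y'_i$ equals $\sum_i y_i$, because exactly one facility is opened and one is closed.

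The crux, and the only place where the hypotheses are really used, is the capacity constraint \eqref{eq:CSCP5} at $i_1$. Because $i_1$ was closed in $(x,y)$, constraint \eqref{eq:CSCP5} forced $x_{i_1 j}=0$ for all $j$, so after the swap the demand loaded on $i_1$ equals exactly the demand previously served by $i_2$, namely $\sum_{j:\,x_{i_2 j}=1} q_j$. This quantity admits two upper bounds: it is at most $Q_{i_2}$ by feasibility of $(x,y)$ at $i_2$, and at most $\Delta_{i_2}(r)=\sum_{j \in C_{i_2}^r} q_j$ since all served customers lie in $C_{i_2}^r$. Hence it is bounded by $K_{i_2}(r)=\min\{\Delta_{i_2}(r),Q_{i_2}\}$, and the assumption $Q_{i_1} \ge K_{i_2}(r)$ gives $\sum_{j:\,x_{i_2 j}=1} q_j \le Q_{i_1}=Q_{i_1}y'_{i_1}$, so \eqref{eq:CSCP5} holds at $i_1$. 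I expect this capacity verification to be the main obstacle: the argument only closes because the swapped load is controlled by the combined bound $K_{i_2}(r)$ rather than by $\Delta_{i_2}(r)$ or $Q_{i_2}$ alone, which is precisely why the proposition states the hypothesis in terms of $K_{i_2}(r)$.
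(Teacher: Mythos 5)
Your proof is correct and follows essentially the same exchange argument the paper relies on: the paper offers only the informal remark that an optimal solution opening $i_2$ but not $i_1$ can always be transformed into an equivalent one opening $i_1$ but not $i_2$, which is exactly the swap you formalize. You additionally spell out the coverage check (via $C_{i_2}^r \subseteq C_{i_1}^r$) and the capacity check at $i_1$ --- in particular that the relocated load is bounded by $K_{i_2}(r)=\min\{\Delta_{i_2}(r),Q_{i_2}\}\leq Q_{i_1}$ --- steps the paper leaves implicit.
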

Note that inequalities \eqref{eq:ineq1} do not forbid solutions in which both $i_1$ and $i_2$ are selected. In the example in Figure \ref{fig:exampleIneq},  facility 1 dominates facility 2, because it can serve the same customers of 2 and, in addition, customer 3.\\

\noindent
\textbf{Forcing service inequalities.} If $\Delta_{i}(r) \leq Q_i$  for a given facility $i$ and coverage radius~$r$, then the capacity constraint associated with $i$ becomes redundant. This means that whenever $i$ is open, all the customers $j \in C_{i}^r$ can be served by it. Hence, we have:
\begin{proposition} \label{proposition:FSI}
The following inequalities are valid for the CSCP-$r$:
\begin{align}
x_{ij} \geq y_{i} & \qquad i \in F : \Delta_{i}(r) \leq Q_i, j \in C_{i}^r. \label{eq:ineq2}
\end{align}
\end{proposition}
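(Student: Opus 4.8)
The plan is to establish \eqref{eq:ineq2} by a solution-transformation argument, in the same spirit as the domination inequalities \eqref{eq:ineq1}: rather than showing that every feasible point satisfies the inequalities, I would show that any feasible solution of the CSCP-$r$ can be converted, without changing its cost, into one that does. Since the objective $\sum_{i \in F} y_i$ depends only on the $y$-variables, it suffices to exhibit a modification of the $x$-variables that keeps all the $y_i$ fixed, preserves feasibility, and enforces $x_{ij} = y_i$ for every qualifying pair $(i,j)$. This guarantees that at least one optimal solution obeys \eqref{eq:ineq2}, so the inequalities may be appended to the model without discarding the optimum.

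First I would fix a facility $i$ with $\Delta_i(r) \le Q_i$ and dispose of the trivial case: if $y_i = 0$, then \eqref{eq:ineq2} reads $x_{ij} \ge 0$, which every binary variable satisfies, so no change is needed. For the substantive case $y_i = 1$, I would define a new assignment by setting $x'_{ij} = 1$ for all $j \in C_i^r$, leaving every other variable --- all $y$-variables and all assignment variables of the remaining facilities --- unchanged. I would then verify feasibility of $(x',y)$ on the two families of constraints. For the capacity constraint \eqref{eq:CSCP5} of facility $i$, the new left-hand side is exactly $\sum_{j \in C_i^r} q_j = \Delta_i(r) \le Q_i = Q_i y_i$, where the inequality is precisely the hypothesis; the capacity constraints of the other facilities are untouched because the transformation is local to $i$.

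The covering constraints \eqref{eq:CSCP2} are where the argument really turns, and this is the point I would stress. Because the transformation only raises assignment variables from $0$ to $1$ and never lowers any, each customer's coverage sum can only increase, so $\sum_{i \in F_j^r} x'_{ij} \ge \sum_{i \in F_j^r} x_{ij} \ge 1$ for all $j \in C$; feasibility of \eqref{eq:CSCP2} is therefore automatic. Repeating the modification for each qualifying facility produces, after finitely many steps, a feasible solution of equal cost that satisfies all of \eqref{eq:ineq2}. I do not anticipate a real difficulty, since the modification is monotone and acts on one facility at a time; the only subtle point --- and the reason the whole scheme works --- is that the covering constraints were relaxed from equalities \eqref{CpCP:const1} to inequalities \eqref{eq:CSCP2}, which lets us add assignments freely without being forced to remove others and thereby risk a cascade of infeasibilities across facilities.
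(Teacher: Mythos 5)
Your proof is correct and matches the paper's justification: the paper argues informally that when $\Delta_i(r)\le Q_i$ the capacity constraint of $i$ is redundant, so all of $C_i^r$ can be assigned to an open $i$, and it explicitly relies on the ``$\geq$'' relaxation of \eqref{eq:CSCP2} to permit the added assignments --- exactly the two points your transformation argument formalizes. No discrepancy to report.
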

Note that the presence of two or more constraints \eqref{eq:ineq2} involving the same customer $j$ is allowed in our model because of the ``$\geq$'' in constraint \eqref{eq:CSCP2}. For the example in Figure \ref{fig:exampleIneq}, if facility 2 is open, then customers 1, 2, 4, and 5 are assigned to it.\\

\noindent
\textbf{Surplus demand inequalities.} Let $\Delta_i(r) > Q_i$ and $S \subset C_i^r$ be a subset of customers whose total demand does not exceed the capacity of $i$. We obtain:
\begin{proposition} \label{proposition:SDI}
The following inequalities are valid for the CSCP-$r$:
\begin{align}
x_{ik} \geq y_i - \bigg(\sum\nolimits_{j \in C_i^r \setminus S} x_{ij}\bigg) & \qquad i \in F, S \subset C_i^r : \sum\nolimits_{j \in S} q_j \leq Q_i, k \in S. \label{eq:ineq3}
\end{align}
\end{proposition}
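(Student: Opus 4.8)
The plan is to establish \eqref{eq:ineq3} in the optimality-preserving sense already invoked for the forcing service inequalities \eqref{eq:ineq2}: rather than holding at every feasible point, I will show that appending \eqref{eq:ineq3} does not increase the optimal value of the CSCP-$r$, by exhibiting, for an arbitrary optimal solution, an equally good solution that satisfies the whole family at once. First I would dispose of the trivial case. If $y_i=0$, constraint \eqref{eq:CSCP5} forces $x_{ij}=0$ for every $j\in C_i^r$, so both sides of \eqref{eq:ineq3} vanish. If $y_i=1$, the right-hand side is positive only when $\sum_{j\in C_i^r\setminus S} x_{ij}=0$, i.e. when $i$ serves no customer outside $S$, in which case it demands $x_{ik}=1$. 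Thus \eqref{eq:ineq3} merely encodes the implication ``if an open facility $i$ serves only customers of $S$, then it serves each $k\in S$'', which is admissible because $\sum_{j\in S} q_j\le Q_i$ guarantees that serving all of $S$ respects the capacity in \eqref{eq:CSCP5}.

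Next I would build the transformation. Take any optimal solution $(x,y)$ and, for each open facility $i$, let $T_i=\{j\in C_i^r : x_{ij}=1\}$ be the set of customers it serves. Since $\sum_{j\in T_i} q_j\le Q_i$, I can greedily insert customers from $C_i^r\setminus T_i$ into $T_i$, one at a time, while the cumulative demand stays at most $Q_i$, until no further customer fits; call the resulting set $T_i^\star$, which is inclusion-maximal among the subsets of $C_i^r$ of total demand at most $Q_i$. Define $x'_{ij}=1$ exactly when $j\in T_i^\star$ (and leave $x'=x$ on closed facilities and $y$ untouched). By construction $\sum_{j\in C_i^r} q_j x'_{ij}\le Q_i y_i$, so \eqref{eq:CSCP5} still holds; since $x'\ge x$ on every pair $(i,j)$ with $i\in F_j^r$, the coverage constraints \eqref{eq:CSCP2} are preserved; and the objective $\sum_{i\in F} y_i$ is unchanged. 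Hence $(x',y)$ is again optimal.

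It then remains to verify that $(x',y)$ satisfies every instance of \eqref{eq:ineq3}. Fix $i$, an admissible $S$ with $\sum_{j\in S} q_j\le Q_i$, and $k\in S$; assume $y_i=1$. If $k\in T_i^\star$ then $x'_{ik}=1$ dominates the right-hand side. If $k\notin T_i^\star$, I must show $\sum_{j\in C_i^r\setminus S} x'_{ij}=|T_i^\star\setminus S|\ge 1$. Suppose instead $T_i^\star\subseteq S$. Then $T_i^\star\cup\{k\}\subseteq S$, so its demand is at most $\sum_{j\in S} q_j\le Q_i$, contradicting the inclusion-maximality of $T_i^\star$ since $k\notin T_i^\star$. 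Therefore $|T_i^\star\setminus S|\ge 1$ and \eqref{eq:ineq3} holds. As a sanity check, when $\Delta_i(r)\le Q_i$ the whole set $C_i^r$ fits, forcing $T_i^\star=C_i^r$, and choosing $S=C_i^r$ recovers the forcing service inequalities \eqref{eq:ineq2}.

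The main obstacle is conceptual rather than computational: \eqref{eq:ineq3} is not valid in the classical polyhedral sense, since an open facility that serves an unnecessarily small set yields a feasible point violating it, so I cannot argue point-by-point over the whole feasible region. The crux is to isolate the single structural property — inclusion-maximality of each served set $T_i^\star$ — that is simultaneously achievable without cost penalty and strong enough to satisfy the entire family as $S$ ranges over all capacity-feasible subsets and $k$ over $S$. The contradiction step above, which shows that maximality is precisely what makes $|T_i^\star\setminus S|\ge 1$ whenever $k\notin T_i^\star$, is the heart of the argument; the capacity, coverage, and objective bookkeeping in the transformation are routine.
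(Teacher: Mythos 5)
Your proof is correct and follows the same idea the paper uses (the paper gives only a one-sentence informal justification for this proposition, namely that serving all of $S$ is capacity-feasible when $\sum_{j\in S} q_j\le Q_i$ and the covering constraints \eqref{eq:CSCP2} use ``$\geq$'', so the assignment can be forced without loss of optimality). You correctly recognize that the inequality holds only in this optimality-preserving sense, and your inclusion-maximality construction of $T_i^\star$ is a rigorous completion of the paper's argument that additionally handles all members of the family \eqref{eq:ineq3} simultaneously.
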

Inequalities \eqref{eq:ineq3} generalize \eqref{eq:ineq2} because they force all customers $k \in S$ to be served by $i$ when~$i$ is open, and all customers $j \in C_i^r \setminus S$ are served by a facility other than $i$. In the example of Figure \ref{fig:exampleIneq}, this inequality occurs for $i=1$ and $S=\{1,3,5\}$, imposing customers 1, 3, and 5 to be assigned to facility 1 if $x_{12}=x_{14}=0$ and {$y_1 = 1$}.\\

\noindent
\textbf{Symmetry breaking inequalities.} Consider the case in which two customers $j_1$ and $j_2$, {with identical demands}, are located in the common coverage area of two facilities $i_1$ and $i_2$, and  suppose  $j_1 < j_2$ and $i_1 < i_2$. Then, one can forbid the assignment of $j_1$ to $i_2$ and of $j_2$ to $i_1$, since this would be equivalent to assigning $j_1$ to $i_1$ and $j_2$ to $i_2$.
Consequently:
\begin{proposition} \label{proposition:BSI}
The following inequalities are valid for the CSCP-$r$ 
\begin{align}
x_{i_1,j_2} + x_{i_2,j_1} \leq 1 & \qquad  i_1, i_2 \in F,  j_1, j_2 \in C_{i_1}^r \cap C_{i_2}^r: i_1 < i_2, j_1 < j_2, q_{j_1} = q_{j_2}. \label{eq:ineq4}
\end{align}
\end{proposition}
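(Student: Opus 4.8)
The plan is to prove validity in the symmetry-breaking sense: I will show that every optimal CSCP-$r$ solution can be transformed, without changing its objective value, into an optimal solution that satisfies all the inequalities \eqref{eq:ineq4} simultaneously. First I would reduce to \emph{single-assignment} solutions. Since constraints \eqref{eq:CSCP2} use ``$\geq$'', a customer may be covered by several open facilities; but, as observed in Section~\ref{subsec:decomposition}, any such solution maps to one of equal cost in which every customer is served by exactly one facility, so I restrict attention to these. For such a solution let $i(j)$ denote the unique facility serving customer~$j$.

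The core step is a local swap argument. Suppose the inequality is violated for a quadruple $(i_1,i_2,j_1,j_2)$ with $i_1<i_2$, $j_1<j_2$, $q_{j_1}=q_{j_2}$ and $j_1,j_2 \in C_{i_1}^r \cap C_{i_2}^r$, that is $x_{i_1,j_2}=x_{i_2,j_1}=1$. Under single assignment this means $i(j_2)=i_1$ and $i(j_1)=i_2$. I would reassign $j_1$ to $i_1$ and $j_2$ to $i_2$, setting the two crossed variables to $0$ and the two aligned variables to $1$. I then verify the three things that could break. Coverage is preserved because $j_1 \in C_{i_1}^r$ and $j_2 \in C_{i_2}^r$, so constraints \eqref{eq:CSCP2} still hold; the loads of $i_1$ and $i_2$ are unchanged because the swapped demands are equal, $q_{j_1}=q_{j_2}$, so constraints \eqref{eq:CSCP5} still hold and no other facility is affected; and the set of open facilities is unchanged, since both $i_1$ and $i_2$ already serve a customer (hence $y_{i_1}=y_{i_2}=1$ by \eqref{eq:CSCP5}) and remain open, leaving the objective \eqref{eq:CSCPinit} unchanged. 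The swap thus yields a feasible solution of equal cost in which that particular inequality now holds.

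The main obstacle is \emph{joint} validity: a single swap might create a new violation elsewhere, so I must rule out cycling and guarantee that repeated swaps terminate at a solution satisfying all of \eqref{eq:ineq4} at once. For this I would introduce the monovariant $\Phi = \sum_{j \in C} j\, i(j)$, the sum over customers of the product of the customer index and the index of its assigned facility. A swap on $(i_1,i_2,j_1,j_2)$ changes $\Phi$ by $(j_1 i_1 + j_2 i_2) - (j_1 i_2 + j_2 i_1) = (i_2-i_1)(j_2-j_1) > 0$, since $i_2>i_1$ and $j_2>j_1$. Every swap therefore strictly increases $\Phi$; as $\Phi$ is bounded above (finitely many customers and facilities), only finitely many swaps can occur. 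When no swap remains possible, no inequality \eqref{eq:ineq4} is violated, and the final solution is still optimal. This shows that the whole family can be imposed simultaneously without cutting off all optima, which is exactly their validity as symmetry-breaking constraints.
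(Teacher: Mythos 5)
Your proof is correct and follows essentially the same route as the paper, which justifies \eqref{eq:ineq4} by exactly this equal-demand swap argument (stated informally in the sentence preceding the proposition; no formal proof is given there). Your additional potential-function step with $\Phi=\sum_{j\in C} j\,i(j)$, which shows that repeated swaps terminate and hence that the whole family of inequalities can be satisfied simultaneously by some optimal solution, is a sound refinement of a point the paper glosses over.
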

Inequalities \eqref{eq:ineq4} do not forbid the assignments of both customers to the same facility (or to neither $i_1$ nor $i_2$).
In the example in Figure \ref{fig:exampleIneq}, the assignment of 4 to 2 and 5 to 1 is forbidden, but the equivalent assignment of 4 to 1 and 5 to 2 is still possible. 
Note that these inequalities may be incompatible with \eqref{eq:ineq2} and \eqref{eq:ineq3}, which might force an assignment instead of forbidding it. To avoid this issue, we simply remove from constraints \eqref{eq:ineq2} and \eqref{eq:ineq3} all inequalities containing a variable that also appears in an inequality of type \eqref{eq:ineq4}.\\

\noindent
\textbf{Capacity inequalities.} Let $S \subseteq C$ be a subset of customers and $F_{S} = \bigcup_{j \in S} F_j$ be the subset of facilities covering at least one customer in $S$. 
If all facilities in $F_{S}$ had the same capacity, say, $Q$, then a lower bound on the minimum number of facilities required to serve $S$ could be imposed by adding
\begin{align*}
\sum\nolimits_{k \in F_{S}} y_{k} \geq \bigg\lceil {\sum\nolimits_{j \in S} q_j}/{Q} \bigg\rceil & \qquad S \subseteq C. 
\end{align*}
Following the work on the heterogeneous vehicle routing problem by \citet{Yaman2006}, {these constraints} can be extended to handle heterogeneous capacities using the next result: 
\begin{proposition} \label{proposition:CI}
The following inequalities are valid for the CSCP-$r$:
\begin{align}
\sum\nolimits_{k \in F_{S}} \lceil {Q_{k}}/{\gamma} \rceil y_{k} \geq \bigg\lceil {\sum\nolimits_{j \in S} q_j}/{\gamma} \bigg\rceil & \qquad S \subseteq C, \gamma \in \mathbb{N}. \label{eq:ineq5}
\end{align}
\end{proposition}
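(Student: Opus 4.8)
The plan is to show that inequalities~\eqref{eq:ineq5} are satisfied by every feasible solution of the CSCP-$r$. Fix any subset $S \subseteq C$ and any $\gamma \in \mathbb{N}$, and consider an arbitrary feasible solution $(x,y)$. The customers in $S$ must all be covered by constraints~\eqref{eq:CSCP2}, and every facility that can cover a customer of $S$ belongs to $F_S$ by definition. Hence the entire demand $\sum_{j \in S} q_j$ is distributed, through the assignment variables, among the open facilities in $F_S$. I would first make this observation precise: for each open facility $k \in F_S$, the demand it actually receives from customers in $S$ is at most $Q_k$ by the capacity constraint~\eqref{eq:CSCP5}.

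Next I would introduce the scaling by $\gamma$. The key arithmetic fact, taken from \citet{Yaman2006}, is that for nonnegative integers a single facility $k$ carrying at most $Q_k$ units of (integer) demand can absorb at most $\lceil Q_k / \gamma \rceil$ of the scaled ``blocks'' of size $\gamma$. More carefully, if a facility serves integer demand $w_k \le Q_k$, then $\lceil w_k / \gamma \rceil \le \lceil Q_k / \gamma \rceil$. Summing this over the open facilities in $F_S$ gives
\begin{align*}
\sum_{k \in F_S} \lceil Q_k / \gamma \rceil \, y_k \;\ge\; \sum_{k \in F_S : y_k = 1} \lceil w_k / \gamma \rceil.
\end{align*}

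The heart of the argument is then the superadditivity of the ceiling function: for nonnegative integers $w_1, \dots, w_t$ one has $\sum_{k} \lceil w_k / \gamma \rceil \ge \lceil (\sum_k w_k)/\gamma \rceil$, since each $\lceil w_k/\gamma\rceil \ge w_k/\gamma$ and the ceiling of the sum is the least integer at least $\sum_k w_k/\gamma$. Because the open facilities collectively serve all of $S$, we have $\sum_{k} w_k \ge \sum_{j \in S} q_j$, and therefore
\begin{align*}
\sum_{k \in F_S} \lceil Q_k / \gamma \rceil \, y_k \;\ge\; \Big\lceil \big(\textstyle\sum_{k} w_k\big)/\gamma \Big\rceil \;\ge\; \Big\lceil \textstyle\sum_{j \in S} q_j / \gamma \Big\rceil,
\end{align*}
which is exactly~\eqref{eq:ineq5}. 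The last inequality uses monotonicity of the ceiling function together with integrality of the $q_j$, as assumed in the problem data.

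The main obstacle, and the place where care is genuinely needed, is the bookkeeping of how demand from $S$ is allocated, given that constraints~\eqref{eq:CSCP2} use ``$\ge$'' rather than ``$=$''. A customer may in principle be assigned (fractionally or with multiplicity, in the covering relaxation) to several facilities, so I must argue that one can still charge each unit of the $S$-demand to some open facility in $F_S$ without double counting in a way that breaks the bound. The clean way to handle this is to first invoke the remark made after~\eqref{eq:CSCPend}, namely that any covering solution maps to an assignment solution of equal objective value in which each customer is served by exactly one facility; on such a solution the $w_k$ are well defined and $\sum_k w_k = \sum_{j \in S} q_j$, and the superadditivity step applies directly. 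I expect verifying that this reduction preserves validity of the inequality for the original (possibly over-covered) solutions to be the only subtle point; everything else is the elementary ceiling-function inequality from \citet{Yaman2006}.
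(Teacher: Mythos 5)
Your argument is correct and complete. Note that the paper itself gives no proof of Proposition~\ref{proposition:CI}: it is stated as an adaptation of the rounded capacity inequalities of \citet{Yaman2006} and left unjustified, so there is no ``paper proof'' to compare against; your chain
$\sum_{k\in F_S}\lceil Q_k/\gamma\rceil y_k \ge \sum_{k:\,y_k=1}\lceil w_k/\gamma\rceil \ge \lceil \sum_k w_k/\gamma\rceil \ge \lceil \sum_{j\in S} q_j/\gamma\rceil$ is exactly the intended justification. One remark: the point you single out as the only subtle one --- handling over-coverage caused by the ``$\geq$'' in \eqref{eq:CSCP2} --- is not actually an obstacle, and the detour through the single-assignment reduction is unnecessary. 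Define $w_k=\sum_{j\in S\cap C_k^r} q_j x_{kj}$ directly on the covering solution; then $w_k\le Q_k$ for every open $k$ by \eqref{eq:CSCP5}, and $\sum_k w_k=\sum_{j\in S}q_j\sum_{k}x_{kj}\ge\sum_{j\in S}q_j$ by \eqref{eq:CSCP2}, so multiple coverage only increases $\sum_k w_k$ and therefore only strengthens the bound. Since \eqref{eq:ineq5} involves only the $y$ variables, no remapping of $x$ is needed at all. (A cosmetic point: the paper writes $F_S=\bigcup_{j\in S}F_j$, but the sets that make the argument work are the radius-restricted $F_j^r$, which is what you implicitly use.)
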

In our implementation, we consider a facility $i$, select all subsets $S$ containing 2 or 3 customers and lying entirely in the coverage area of $i$, then compute $F_S$ and impose $\gamma = Q_i$ in \eqref{eq:ineq5}. The process is repeated for all $i \in F$. 
In Figure \ref{fig:exampleIneq}, the example for inequality~\eqref{eq:ineq5} is obtained by setting $S=\{1,2,3\}$, $F_{S}=\{1,2,3,4,5\}$, and $\gamma = Q_1$.\\

\noindent
{\textbf{Subset sum inequalities.}} From \citet{Boschetti2002}, the capacity of a facility $i$ can potentially be decreased, while preserving optimality, by computing the maximal capacity usage through the solution of the following \emph{subset sum problem} (SSP): 
\begin{equation*}
Q'_i=\max \bigg\{ \sum\nolimits_{j \in C_i^r} q_j x_{ij} : \sum\nolimits_{j \in C_i^r} q_j x_{ij} \leq Q_i,  x_{ij} \in \{0,1\} \text{ for } j \in C_i^r \bigg\}. 
\end{equation*}
Since $Q'_i \leq Q_i$, the capacity constraints \eqref{CpCP:const5} can be improved to  
\begin{align}
\sum\nolimits_{j \in C_{i}^r} q_j x_{ij} \leq Q'_i y_i & \qquad i \in F. \label{BMIneq1}
\end{align}
The same idea can be applied to increase the customers' demands. The demand of a customer~$k$, when assigned to a facility $i$, can be increased by evaluating the maximal usage of the facility capacity through the solution of the SSP
\begin{equation*}
\resizebox{.92\hsize}{!}{$
\beta^{1}_{ik}=\max \Bigg\{ \sum\nolimits_{j \in C_i^r \setminus \{k\}} q_j x_{ij} : \sum\nolimits_{j \in C_i^r \setminus \{k\}} q_{j} x_{ij} \leq Q_i - q_k, \text{ } x_{ij} \in \{0,1\} \text{ for } j \in C_i^r \setminus \{k\} \Bigg\}. 
$}
\end{equation*}
If $\beta^1_{ik}$ is lower than $Q_i - q_k$, then the unused capacity in  $i$ amounts to at least $Q_i - q_k - \beta^1_{ik}$ units, so $q_k$ can be increased to $q_k + Q_i - q_k - \beta^1_{ik} = Q_i - \beta^1_{ik}$, leading to 
\begin{proposition} \label{proposition:SSI}
The following inequalities are valid for the CSCP-$r$:
\begin{align}
\sum\nolimits_{j \in C_{i}^r \setminus \{k\}} q_j x_{ij} + (Q_i - \beta^1_{ik})x_{ik} \leq Q_i y_i & \qquad i \in F, k \in C_i^r. \label{BMIneq2}
\end{align}
\end{proposition}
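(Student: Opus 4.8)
The plan is to verify \eqref{BMIneq2} directly on the integer feasible region of the CSCP-$r$, splitting into cases according to the value of $x_{ik}$ for a fixed feasible solution $(x,y)$. Before the case analysis, I would record the elementary observation that the new coefficient genuinely strengthens the old one: the all-zero vector is feasible for the subset sum problem, and the SSP capacity bound gives $\beta^{1}_{ik} \leq Q_i - q_k$, so $Q_i - \beta^{1}_{ik} \geq q_k$. Thus \eqref{BMIneq2} has a coefficient on $x_{ik}$ at least as large as in the original capacity constraint \eqref{eq:CSCP5}, confirming that it is a tightening; the real content is to show it stays valid.

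The case $x_{ik} = 0$ is immediate: the term $(Q_i - \beta^{1}_{ik})x_{ik}$ vanishes, so the left-hand side of \eqref{BMIneq2} collapses to $\sum_{j \in C_i^r} q_j x_{ij}$, the full capacity usage of $i$, which is bounded by $Q_i y_i$ by \eqref{eq:CSCP5}. The case that carries the argument is $x_{ik} = 1$. Here I would first use the linking inequalities \eqref{eq:CSCP3} (valid for the CSCP-$r$), which give $x_{ik} \leq y_i$, so that $x_{ik}=1$ forces $y_i = 1$ and the right-hand side of \eqref{BMIneq2} equals $Q_i$. Substituting $x_{ik}=1$ and moving $Q_i - \beta^{1}_{ik}$ to the right, the inequality to prove reduces to $\sum_{j \in C_i^r \setminus \{k\}} q_j x_{ij} \leq \beta^{1}_{ik}$.

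The crux is then a single appeal to the optimality defining $\beta^{1}_{ik}$. From \eqref{eq:CSCP5} with $y_i = 1$ and $x_{ik}=1$ we obtain $\sum_{j \in C_i^r \setminus \{k\}} q_j x_{ij} \leq Q_i - q_k$, which says exactly that the restriction of the assignment vector $(x_{ij})_{j \in C_i^r \setminus \{k\}}$ is a binary point satisfying the knapsack constraint of the SSP that defines $\beta^{1}_{ik}$. Since $\beta^{1}_{ik}$ is the maximum objective value over that feasible region, this particular point attains an objective no larger than $\beta^{1}_{ik}$, namely $\sum_{j \in C_i^r \setminus \{k\}} q_j x_{ij} \leq \beta^{1}_{ik}$, which is precisely the reduced inequality.

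I do not anticipate a deep obstacle: the proof is a two-case split whose nontrivial branch rests entirely on recognizing the partial assignment as a feasible SSP solution and invoking the maximality of $\beta^{1}_{ik}$. The one point that warrants care is excluding the spurious combination $x_{ik}=1$ with $y_i=0$; I would dispatch it through the linking inequalities \eqref{eq:CSCP3} rather than through an implicit positivity assumption on the demands $q_j$, so that the validity argument does not silently depend on a sign hypothesis not stated in the model.
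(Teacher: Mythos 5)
Your proof is correct and follows essentially the same reasoning as the paper, which justifies \eqref{BMIneq2} by the informal ``lifting'' argument that the maximality of $\beta^{1}_{ik}$ leaves at least $Q_i-q_k-\beta^{1}_{ik}$ units of slack whenever $x_{ik}=1$, and which uses exactly your case split on $x_{ik}\in\{0,1\}$ in its proof of the companion disjunctive inequalities \eqref{ourSSineq}. Your explicit appeal to the linking constraints \eqref{eq:CSCP3} to rule out $x_{ik}=1$ with $y_i=0$ is a sound way to formalize what the paper leaves implicit.
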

These cuts have been used intensively in the cutting and packing literature. However, it is easy to find cases in which they do not provide any improvement with respect to the original capacity constraints \eqref{CpCP:const5}. This is highlighted by the example in Figure \ref{fig:exampleIneq}, in which both \eqref{BMIneq1} for facility 1, and \eqref{BMIneq2} for facility 1 and customer 2 are equivalent to  \eqref{CpCP:const5}. This fact holds for all possible cuts in the example. {The following inequalities provide, instead, a stronger improvement over \eqref{CpCP:const5}.}\\

\noindent
{\textbf{Disjunctive subset sum inequalities.}}
As discussed earlier in this section, $\beta^{1}_{ik}$ represents the maximum possible use of capacity $Q_i$ when $k$ is served by $i$. Now let $\beta^{0}_{ik}$ be the maximum $i_k$ use when $k$ is not served by $i$, namely:
\begin{equation*}
\beta^{0}_{ik}=\max \Bigg\{ \sum\nolimits_{j \in C_i^r \setminus \{k\}} q_j x_{ij} : \sum\nolimits_{j \in C_i^r \setminus \{k\}} q_{j} x_{ij} \leq Q_i, \text{ } x_{ij} \in \{0,1\} \text{ for } j \in C_i^r \setminus \{k\} \Bigg\}. 
\end{equation*}
{An extension of the subset sum inequalities \eqref{BMIneq2} can be performed by} embedding these $\beta^{0}_{ik}$ values in the trivial disjunctive cuts by \cite{Balas1973} (in our case being $x_{ij} \leq 0 \lor x_{ij} \geq 1$ for $i \in F, j \in C_i^r$),
as follows:
\begin{proposition} \label{proposition:DBSSI}
The following inequalities are valid for the CSCP-$r$:
\begin{align}
\qquad \sum\nolimits_{j \in C_i^r \setminus \{k\}} q_{j} x_{ij} \leq \beta^{0}_{ik}y_{i} - (\beta^{0}_{ik}-\beta^{1}_{ik})x_{ik} & \qquad i \in F, k \in C_i^r. \label{ourSSineq}
\end{align}
\end{proposition}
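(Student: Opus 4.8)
The plan is to prove validity by a direct case analysis on the two binary variables $y_i$ and $x_{ik}$ appearing in \eqref{ourSSineq}, exploiting the fact that its right-hand side is an affine function of these variables engineered to coincide with the natural capacity-based bound in each case. Throughout I abbreviate the left-hand side by $L = \sum_{j \in C_i^r \setminus \{k\}} q_j x_{ij}$, and I use that demands are positive integers, so that the capacity constraint \eqref{eq:CSCP5} forces $x_{ij} = 0$ for all $j \in C_i^r$ whenever $y_i = 0$. The closed-facility case $y_i = 0$ is then immediate: constraint \eqref{eq:CSCP5} gives $\sum_{j \in C_i^r} q_j x_{ij} \leq 0$, hence $L = 0$ and $x_{ik} = 0$, while the right-hand side of \eqref{ourSSineq} is likewise $0$, so the inequality holds.

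Next I would assume $y_i = 1$ and split on $x_{ik}$. If $x_{ik} = 0$ (facility open, $k$ not served by it), then the customers of $C_i^r \setminus \{k\}$ assigned to $i$ consume at most the full capacity, i.e. $L \leq Q_i$; thus $L$ is a feasible objective value of the subset sum problem defining $\beta^{0}_{ik}$, whence $L \leq \beta^{0}_{ik}$, and substituting $y_i = 1,\ x_{ik} = 0$ collapses the right-hand side to exactly $\beta^{0}_{ik}$. If instead $x_{ik} = 1$ (facility open, $k$ served by it), then $k$ already uses $q_k$ units, so $L \leq Q_i - q_k$; hence $L$ is feasible for the subset sum problem defining $\beta^{1}_{ik}$, giving $L \leq \beta^{1}_{ik}$, and substituting $y_i = 1,\ x_{ik} = 1$ collapses the right-hand side to $\beta^{0}_{ik} - (\beta^{0}_{ik}-\beta^{1}_{ik}) = \beta^{1}_{ik}$. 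Since $y_i, x_{ik} \in \{0,1\}$, these three cases are exhaustive, and in each the inequality reduces to the correct tight bound, establishing validity.

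Finally, I would note that this is exactly the disjunctive cut obtained by applying the trivial 0-1 disjunction $x_{ik} \leq 0 \lor x_{ik} \geq 1$ of \citet{Balas1973}: the two branch inequalities $L \leq \beta^{0}_{ik} y_i$ and $L \leq \beta^{1}_{ik} y_i$ are combined into the single affine form in \eqref{ourSSineq}, which interpolates between them as $x_{ik}$ moves from $0$ to $1$. Because the constraint defining $\beta^{0}_{ik}$ is looser than the one defining $\beta^{1}_{ik}$, we have $\beta^{0}_{ik} \geq \beta^{1}_{ik}$, so the coefficient $\beta^{0}_{ik}-\beta^{1}_{ik}$ is nonnegative and the cut genuinely tightens the relaxation when $x_{ik}$ is fractional. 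The argument presents no real obstacle; the only point requiring care is to match the two capacity levels — the full $Q_i$ for $\beta^{0}_{ik}$ versus the reduced $Q_i - q_k$ for $\beta^{1}_{ik}$ — to the correct binary case, so that the affine right-hand side reproduces exactly the tightest subset-sum bound in each branch.
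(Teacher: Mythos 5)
Your proof is correct and follows essentially the same route as the paper's: a case analysis on the binary variables $y_i$ and $x_{ik}$, showing that in each case the right-hand side collapses to the appropriate subset-sum bound ($\beta^{0}_{ik}$ or $\beta^{1}_{ik}$) whose maximality yields validity. Your version is slightly more explicit (separating the $y_i=0$ case and noting $\beta^{0}_{ik}\geq\beta^{1}_{ik}$), but there is no substantive difference in the argument.
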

\begin{proof}
  We wish to prove that inequalities \eqref{ourSSineq} correspond to the disjunction
  \begin{align*}
  \qquad \sum\nolimits_{j \in C_i^r \setminus \{k\}} q_{j} x_{ij} \leq \beta^{0}_{ik}y_{i} \qquad  \lor  \qquad  \sum\nolimits_{j \in C_i^r \setminus \{k\}} q_{j} x_{ij} \leq \beta^{1}_{ik}y_{i} & \qquad i \in F, j \in C_i^r.
  \end{align*}
  We consider the two cases in which $x_{ik}$ takes value 0 (left) or 1 (right). When $x_{ik}=0$, constraints \eqref{ourSSineq} directly reduce to
  \begin{align*}
  \qquad \sum\nolimits_{j \in C_i^r \setminus \{k\}} q_{j} x_{ij} \leq \beta^{0}_{ik} y_{i},
  \end{align*}
  which is valid because all $x_{ij}$ take value 0 when $y_i=0$ and because of the maximality of $\beta^{0}_{ik}$.
  If $x_{ik}=1$ instead, then $y_i$ must take value 1 too, and thus constraints \eqref{ourSSineq} can be~rewritten~as
  \begin{align*}
  \qquad \sum\nolimits_{j \in C_i^r \setminus \{k\}} q_{j} x_{ij} \leq \beta^{0}_{ik} y_i - (\beta^{0}_{ik} - \beta^{1}_{ik}) \leq \beta^{0}_{ik} y_i - (\beta^{0}_{ik} - \beta^{1}_{ik}) y_i = \beta^{1}_{ik} y_i,
  \end{align*}
  which is valid because of the maximality of $\beta^{1}_{ik}$.
\end{proof}

The idea behind inequalities \eqref{ourSSineq} is reminiscent of \emph{up-lifting} and \emph{down-lifting} in cutting plane theory. For instance, in a binary integer problem, a given inequality can be strengthened by considering the linear relaxation solution at a certain {node of a branch-and-bound tree and taking into account the previously performed branches to 0 or 1} (see, e.g., \citealt{Kaparis2008}, and \citealt{Vasilyev2016}).

In Figure \ref{fig:exampleIneq}, the inequality \eqref{ourSSineq} applied to facility 1 and customer 2 leads to a stronger capacity constraint. 
Inequalities \eqref{ourSSineq}  are indeed more effective than the well-known inequalities \eqref{BMIneq2} but also more specialized: the constraints \eqref{BMIneq2} can be readily extended to integer $x$ variables, whereas the constraints \eqref{ourSSineq} hold only for binary $x$ variables due to the nature of the disjunction.

\begin{figure}[htbp]
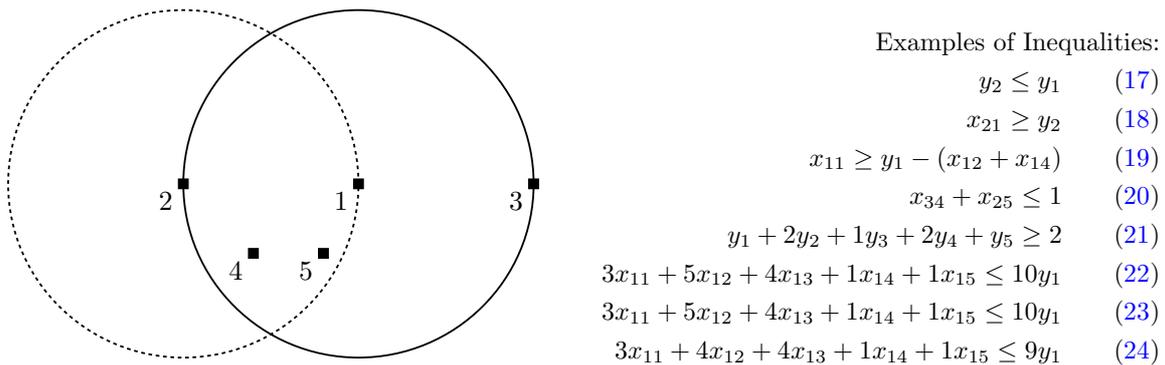

\centering
\scalebox{0.97}
{
\begin{tabular}{cc}
\includestandalone[scale=0.475]{SingleExample} \hspace{0.2cm}
& 
$\begin{aligned}[b]
\text{Examples of Inequalities:} \\
y_2 \leq y_1 \qquad (\ref{eq:ineq1})\\ 
x_{21} \geq y_2 \qquad (\ref{eq:ineq2})\\ 
x_{11} \geq y_1 - (x_{12}+x_{14}) \qquad (\ref{eq:ineq3})\\ 
x_{34} + x_{25} \leq 1 \qquad (\ref{eq:ineq4})\\ 
y_1 + 2 y_2 + 1 y_3 + 2 y_4 + y_5 \geq 2 \qquad (\ref{eq:ineq5})\\ 
3x_{11} + 5x_{12} + 4x_{13} + 1x_{14} + 1x_{15} \leq 10y_1 \qquad (\ref{BMIneq1})\\ 
3x_{11} + 5x_{12} + 4x_{13} + 1x_{14} + 1x_{15} \leq 10y_1 \qquad (\ref{BMIneq2})\\ 
3x_{11} + 4x_{12} + 4x_{13} + 1x_{14} + 1x_{15} \leq 9y_1  \qquad (\ref{ourSSineq})
\end{aligned}$ 
\\
\end{tabular}
}
\caption{Examples of valid inequalities for a small instance. Input data: $F = C = \{1, 2, 3, 4, 5\}$; $Q=$[10, 15, 5, 15, 10]; and $q=$[3, 5, 4, 1, 1].}
\label{fig:exampleIneq}
\end{figure}

\section{Searching for the Optimal Radius} \label{sec:method}

As presented in Section \ref{subsec:decomposition}, an optimal CPCP solution can be found by iteratively solving a series of CSCP-$r$ subproblems, each associated with a different distance. Let $(z_1, \dots, z_D)$ be the distinct values from the distance matrix, in increasing order. Then, $z_{k}$, for $k \in \{1, \dots,D\}$, is the optimal distance value if and only if there exists a feasible solution for the CSCP-$r$ subproblem with $r=z_{k}$, and either $k=1$ or there is no feasible solution for $r=z_{k-1}$. {Apart from the trivial case where $z_1$ is optimal}, at least two subproblems, one feasible and one infeasible, have to be solved to prove optimality, but more subproblems are usually solved because the optimal value is unknown. \\

\noindent
\textbf{Binary and sequential search.}
The most straightforward approach to find the optimal distance is to adopt a \emph{binary search} over the vector $(z_{i^\textsc{low}},\dots,z_{i^\textsc{up}})$, where $z_{i^\textsc{low}}$ is a strict lower bound on the distance (known to be infeasible) and $z_{i^\textsc{up}}$ is an upper bound (known to be feasible). Iteratively, the subproblem obtained for $z_{i^\textsc{mid}}$ with $i^\textsc{mid} =  \lfloor \frac{i^\textsc{low}+i^\textsc{up}}{2} \rfloor$ is solved. If the subproblem is feasible, then the next iteration is performed over $(z_{i^\textsc{low}},\dots,z_{i^\textsc{mid}})$, otherwise the search is performed over $(z_{i^\textsc{mid}},\dots,z_{i^\textsc{up}})$. The process terminates when $i^\textsc{up} = i^\textsc{low}+1$. This search strategy is guaranteed to take $O(\log D) = O(\log (n m))$ iterations. This is the best possible method in terms of number of search iterations. However, the two drawbacks of this approach are that it tends to \emph{overshoot} the value of the optimal solution in the early steps of the search, and that it solves on average 50\% of feasible and infeasible subproblems. In our context, the size of the mathematical formulation solved at each iteration grows quickly with $z$, because more variables and constraints have to be considered. Moreover, in our experiments it appeared to be more computationally expensive to solve a feasible problem (i.e., find a feasible CPCP solution) than an infeasible one (i.e., to prove infeasibility).

In light of the previous observations, the second most natural strategy involves a \emph{sequential search}, solving the subproblems in increasing order of $z \in (z_{i^\textsc{low}},\dots,z_{i^\textsc{up}})$ and stopping as soon as a feasible solution is found. This strategy circumvents the two aforementioned issues, because it avoids large problems with $z$ values that are greater than the optimum. It also requires the solution of only one feasible subproblem. However, its drawback is the high number of iterations, which rises to $O(n m)$ and renders this approach slower for problems with many distinct distance values and a weak lower bound.\\

\noindent
\textbf{Layered search.}
We {\small thus propose an} alternative search methodology, called \textsc{L-Layered Search} and described in Algorithm~\ref{alternativeSearch}, which combines the benefits of both previous approaches. This method can be viewed as a recursive sequential search, starting with larger increments \mbox{(when $L > 1$)}, and finishing with smaller ones (a classical sequential search when $L=1$). At each step, the algorithm solves a sequence of subproblems with increasing distance bounds \mbox{(lines 5--8)}, and then performs a recursive call with parameter $L-1$ on a smaller interval \mbox{(lines 9--13)}. 
The algorithm is initially called with the function \textsc{Layered Search}$(L,i^\textsc{low},i^\textsc{up})$, and it terminates as soon as $i^\textsc{up} = i^\textsc{low}+1$ (line 1). 
The increment size of $\delta =  \lceil (i^\textsc{up}-i^\textsc{low}-1)^{(L-1)/L} \rceil$ is chosen at each recursion (line 2) to attain good overall complexity and guarantee a small number of calls to feasible subproblems.

\begin{figure}[ht]
  \centering
  \begin{minipage}{1.0\linewidth}
\begin{algorithm}[H]
\linespread{1.25}\selectfont

 \textbf{if} $i^\textsc{up} = i^\textsc{low}+1$ \textbf{then return} $z_{i^\textsc{up}}$ \textbf{end if}

 $\delta \leftarrow \lceil (i^\textsc{up}-i^\textsc{low}-1)^{(L-1)/L} \rceil$ \label{delta}

$i \gets i^\textsc{low} $

$isFeasible \leftarrow \textsc{False}$

\While {$isFeasible= \textsc{False}$ \textbf{and} $i + \delta < i^\textsc{up}$} 
{

	$i \leftarrow i+\delta$

	$isFeasible \leftarrow$ \textsc{SolveCSCP}-$r$($z_{i}$)    
}

\eIf{$isFeasible= \textsc{True}$}
{
\textsc{Layered Search}$(L-1,i-\delta,i)$
}
{
\textsc{Layered Search}$(L-1,i,i^\textsc{up})$
}

\caption{\textsc{Layered Search}$(L,i^\textsc{low},i^\textsc{up})$} \label{alternativeSearch}
\end{algorithm}
 \end{minipage}
\end{figure}

\begin{proposition} \label{proposition-layered1}
The number of feasible subproblems solved by \textsc{Layered Search}$(L,i^\textsc{low},$ $i^\textsc{up})$ is in $O(L)$.
\end{proposition}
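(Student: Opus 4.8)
The plan is to combine two facts: feasibility of the subproblems is monotone in the coverage radius, and the recursion is a single chain along which $L$ decreases by one. Recall from Section~\ref{subsec:decomposition} that \textsc{SolveCSCP}-$r(z_i)$ returns \textsc{True} precisely when the CPCP admits a solution of value at most $z_i$, that is, when at most $p$ facilities suffice to cover all customers using only assignments of distance at most $z_i$. Since $C_i^{r}\subseteq C_i^{r'}$ and $F_j^{r}\subseteq F_j^{r'}$ whenever $r\le r'$, any assignment that is feasible for radius $z_i$ stays feasible for every larger radius; hence the outcome of \textsc{SolveCSCP}-$r(z_i)$ is monotone non-decreasing in $i$, and there is a threshold index below which all calls are infeasible and at or above which all calls are feasible.

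First I would bound the feasible subproblems solved in a single invocation. The loop on lines~5--8 evaluates $z_i$ for the strictly increasing indices $i=i^\textsc{low}+\delta, i^\textsc{low}+2\delta,\dots$ and terminates the instant \textsc{SolveCSCP}-$r$ first returns \textsc{True}, because this assignment to the feasibility flag falsifies the loop guard. Therefore every subproblem solved inside the loop is infeasible except possibly the last one, so a single invocation solves \emph{at most one} feasible subproblem. Monotonicity further guarantees that this single feasible index is the true boundary on the sampled grid, which is why recursing on the bracketing interval $(i-\delta,i)$ is correct; but for the count it is enough that at most one \textsc{True} can be returned.

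Next I would examine the recursion. Any invocation that does not already return at line~1 reaches the conditional on lines~9--13 and makes exactly one recursive call, always with first argument $L-1$. The execution is thus a chain, not a branching tree, and $L$ strictly decreases along it. When $L=1$ the exponent $(L-1)/L$ equals $0$, so $\delta=\lceil(i^\textsc{up}-i^\textsc{low}-1)^{0}\rceil=1$; the recursive call is then made on an interval of the form $(i-1,i)$ or $(i^\textsc{up}-1,i^\textsc{up})$, and in either case the child invocation satisfies $i^\textsc{up}=i^\textsc{low}+1$ and returns immediately at line~1 without solving any subproblem. In particular the ill-defined exponent at $L=0$ is never evaluated, and at most $L$ invocations, those with parameters $L,L-1,\dots,1$, ever reach the loop.

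Combining the two bounds yields the claim: the total number of feasible subproblems is at most the number of invocations that reach the loop times the number of feasible subproblems solved per invocation, i.e., at most $L\cdot 1=L$, which is $O(L)$. The only delicate step is the per-invocation bound, which rests squarely on monotonicity of feasibility together with the early-exit behaviour of the loop; granted that, the chain structure of the recursion makes the global count immediate.
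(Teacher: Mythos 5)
Your proof is correct and follows essentially the same route as the paper's: the loop of lines 5--8 exits as soon as a feasible subproblem is encountered, so each invocation solves at most one feasible subproblem, and since the recursion is a single chain along which $L$ decreases to the base case, at most $L$ invocations reach the loop. The monotonicity discussion you include is really about the correctness of the bracketing recursion rather than the count itself (as you note), and the paper omits it; otherwise the two arguments coincide.
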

\begin{proof}
{At each recursive call, at most one feasible subproblem is solved in the loop of lines 5--8, and~$L$ is decremented by one unit. Moreover, when $L=1$ the behavior of the algorithm corresponds to a sequential search (because $\delta = 1$ at step 2), and this necessarily leads to the termination criterion for $L=0$, if that was not already attained.} 
\end{proof}

\begin{proposition} \label{property-layered2}
The number of subproblems solved by \textsc{Layered Search}$(L,i^\textsc{low},i^\textsc{up})$ is in \mbox{$O(L (i^\textsc{up} - i^\textsc{low} - 1)^{1/L})$}.
\end{proposition}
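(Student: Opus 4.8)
The plan is to bound separately the work performed during a single invocation of \textsc{Layered Search} and then to control how the length of the search interval decays along the chain of recursive calls. Write $N = i^\textsc{up} - i^\textsc{low} - 1$ for the number of candidate indices strictly between the endpoints. First I would estimate the number of subproblems solved in the loop of lines 5--8 of a single call with parameters $(L, i^\textsc{low}, i^\textsc{up})$. Since the index $i$ starts at $i^\textsc{low}$ and advances by $\delta = \lceil N^{(L-1)/L}\rceil$ as long as $i + \delta < i^\textsc{up}$, the loop is entered at most $N/\delta + 1$ times; using $\delta \geq N^{(L-1)/L}$, this yields at most $N^{1/L} + 1$ calls to \textsc{SolveCSCP}-$r$ at this level.

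The second step is to show that each recursive call shrinks the interval geometrically. In the feasible branch the algorithm recurses on $(i-\delta, i)$, whose interval length is exactly $\delta - 1$; in the infeasible branch it recurses on $(i, i^\textsc{up})$, and the loop-exit condition $i + \delta \geq i^\textsc{up}$ forces the new length to be at most $\delta - 1$ as well. Hence in both cases the new length $N'$ satisfies $N' \leq \delta - 1 = \lceil N^{(L-1)/L}\rceil - 1 < N^{(L-1)/L}$, where the last strict inequality is simply $\lceil x\rceil - 1 < x$.

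I would then combine these two facts by induction on the recursion depth. Let $N_k$ denote the interval length at the call carrying parameter $L-k$, so that $N_0 = N$. The decay estimate gives $N_{k+1} < N_k^{(L-k-1)/(L-k)}$, from which a short induction produces $N_k \leq N^{(L-k)/L}$ for every $k$. Substituting this into the per-level bound, the number of solves at the call with parameter $L-k$ is at most $N_k^{1/(L-k)} + 1 \leq N^{1/L} + 1$. Since the parameter decreases by one at each call, there are at most $L$ non-trivial levels (parameters $L$ down to $1$), the subsequent parameter-$0$ call halting immediately at line 1 (exactly as argued in the proof of Proposition~\ref{proposition-layered1}). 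Summing the per-level bound over these $L$ levels gives a total of at most $L(N^{1/L} + 1) = O(L\,N^{1/L})$ subproblems, which is the claimed estimate.

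The main obstacle I anticipate is the careful handling of the ceiling in $\delta$, which must not spoil either bound. For the per-level count the rounding only helps, since it enlarges $\delta$ and therefore reduces the number of loop iterations while preserving $\delta \geq N^{(L-1)/L}$. For the decay estimate the danger is the extra unit introduced by the ceiling; this is precisely absorbed by combining the strict inequality $N' < \delta$ with the integrality of $N'$, which is why the bound $N' < N^{(L-1)/L}$ comes out clean and the induction closes without accumulating error. A final routine check is the base of the recursion: when $L=1$ one has $\delta = 1$, so the loop degenerates into a sequential scan over a subinterval of length at most $N^{1/L}$, consistent with both the per-level and the decay estimates.
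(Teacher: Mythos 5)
Your argument is correct and follows essentially the same route as the paper: bound the number of solves per call by $N/\delta \le N^{1/L}$, observe that each recursive call shrinks the interval length to at most $\lceil N^{(L-1)/L}\rceil - 1 < N^{(L-1)/L}$, and close the resulting recurrence by induction on the recursion depth. You merely unroll explicitly the ``direct induction'' that the paper leaves implicit (your invariant $N_k \le N^{(L-k)/L}$ is exactly what makes it work), and your slightly looser per-level count $N/\delta + 1$ still lands inside the claimed $O(L\,N^{1/L})$ bound.
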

\begin{proof}
Let $N =  i^\textsc{up} - i^\textsc{low} - 1$ be the number of subproblems with unknown status within the range $\{i^\textsc{low},\dots,i^\textsc{up}\}$, and let $T(L,N)$ be the complexity of \textsc{Layered Search}$(L, i^\textsc{low}, i^\textsc{up})$ in terms of number of subproblem resolutions. The algorithm solves up to
$
\frac{N}{ \lceil N^{(L-1)/L} \rceil} \leq N^{1/L}
$
subproblems, and then performs a recursive call on a smaller range, in which the number of subproblems with unknown feasibility status is reduced to $\lceil N^{(L-1)/L} \rceil - 1$. Therefore, the following holds: 
\begin{equation}
T(L,N) \leq
\begin{cases}
N & \text{ if } L = 1 \\
N^{1/L} + T(L-1,\lceil N^{(L-1)/L} \rceil - 1) & \text{ if } L > 1, \nonumber \\ 
\end{cases}
\end{equation}
and $T(L,N) \leq L N^{1/L}$ follows by direct induction. 
\end{proof}

The complexity of the three search strategies, in terms of total number of subproblem calls and number of calls to feasible subproblems, is summarized in Table \ref{tab:strat}. One can observe that the L-layered search constitutes an interesting alternative between sequential and binary search, behaving as a sequential search when $L=1$, and coming closer to a binary search (low number of subproblems overall, but no control on the number of feasible subproblems) as $L$ grows larger. Typical values for $L$ are in the range $\{2,\dots,5\}$.

\clearpage

\begin{table}[htbp]
\renewcommand{\arraystretch}{1.5}
\centering
\caption{{Complexity of search strategies (${N}$ = number of distinct values in the range to be searched)}}
\scalebox{.95}
{
\begin{tabular}{cccccc}
\toprule
& {Sequential} & {2-Layers} & {3-Layers} & {L-Layers} & {Binary} \\
\cmidrule(r){1-1}\cmidrule(r){2-6}
{\# overall subproblems}  & $O(N)$ & $O(\sqrt{N})$ & $O(N^{1/3})$ & $O(LN^{1/L})$ &  $O(\log N)$ \\
{\# feasible subproblems} & 1 & 2 & 3 & L & $O(\log N)$ \\
\bottomrule
\end{tabular}
}
\label{tab:strat}
\end{table} 

\noindent
\textbf{Other speedup techniques.}
{A complete solution of the CSCP-$r$ subproblem is often unnecessary. To reduce the computational effort, we use three simple techniques: 
(i) we estimate the maximum demand that can be covered by each facility by looking at the customers in its coverage area, and then, if the sum of the $p$ largest such values is lower than the total demand, we avoid solving the CSCP-$r$ (since the subproblem is clearly infeasible); (ii) we terminate the solution of each CSCP-$r$ whenever a valid lower bound greater than $p$ is achieved (infeasible subproblem); (iii) we terminate the solution of the CSCP-$r$ whenever a feasible solution opening at most $p$ facilities is found (feasible subproblem).}

In addition, we obtain initial upper bounds on the optimal radius via the heuristic of \citet{QuevedoOrozco2015}, which is the current state-of-the-art metaheuristic for the CPCP. However, as reported by the authors, this algorithm does not perform well on instances with a small $n/p$ ratio. Thus, we also perform a simple {\em iterated local search} (ILS), possibly obtaining a better initial upper bound.

Our simple ILS builds an initial solution by opening one facility at a time and assigning customers as follows.
Let $\bar{C}$ be the set of unassigned customers.
The facility $i$ that realizes the minimum of $\max_{j \in \bar{C}}d_{ij}/Q_i$ is opened, 
and the closest customers in $\bar{C}$ are assigned to~$i$, iteratively, 
until the capacity constraints prevent further assignments. This process is repeated until a maximum of $p$ facilities have been opened. If some customers remain unassigned, then they are assigned to their closest facility, possibly giving an infeasible initial solution. 

Subsequently, the method attempts to improve this solution with local search and perturbation steps. Let $\bar Q_i$ be the excess of capacity in facility $i$ (either zero for a feasible solution, or strictly positive for an infeasible solution), and $\bar{d}_{ij} = d_{ij} + M \bar Q_i$ be the cost function associated with the assignment of customer~$j$ to facility $i$, where $M$ is a large value. The ILS attempts to change the customer and facility associated with the largest $\bar{d}_{ij}$. Three types of moves are considered: \textsc{Cust-Swap}, \textsc{Relocate}, and \textsc{Fac-Swap}, invoked in this order. \textsc{Cust-Swap} exchanges two customers assigned to different facilities, \textsc{Relocate} attempts to move a customer from one facility to another, and \textsc{Fac-Swap} exchanges an open facility with a closed one.
Finally, a strong perturbation operator is invoked once no improving move can be found. 
{This perturbation randomly exchanges the open facilities with the closed ones (one exchange for each open facility).} 
The algorithm terminates after 300 executions of the local search and perturbation mechanism.

This simplistic metaheuristic was able to produce a feasible solution for all the instances presented in Section \ref{sec:instances}, and it led to a better upper bound in 73 instances. The detailed results are provided on our webpage: \url{http://www.or.unimore.it/site/home/online-resources.html}.

\FloatBarrier
\section{Computational Experiments}
\label{sec:results}

{We conducted extensive computational experiments to evaluate the performance of the proposed techniques. The algorithms were coded in C++ and executed on a single thread of an Intel Core i5-2410M 2.3GHz with 4GB of RAM, running under Linux Mint 17.2 64-bit. GUROBI 6.51 was adopted to solve the MILP models, using its default parameters. A time limit of 600 seconds was allowed for each run.}

\subsection{Benchmark instances} \label{sec:instances}

{We considered five sets of instances: four are from previous literature, and the fifth is a set of large instances that we created to better evaluate our methods. Following the literature, in all the test sets each vertex is both a customer and a candidate location for a facility (i.e., $C=F$).
\begin{itemize}
 \item Set 1 (S1) -- This set contains $160$ instances proposed for the CPMP by \citet{Ceselli2005} and derived from $40$ different graphs containing either $50$, $100$, $150$, or $200$ vertices. {This set includes and extends 20 instances from the OR-Library used in \citet{Scaparraetal2004}.} 
 The demands and the coordinates of the vertices were randomly generated, and the distances were obtained by computing Euclidean values rounded down to the nearest integer. From each graph, four instances were produced by setting $p$ equal to $\lfloor n/10 \rfloor$, $\lfloor n/4 \rfloor$, $\lfloor n/3 \rfloor$, or $\lfloor n/2.5 \rfloor$. The facility capacities are homogeneous and set to $\lceil 12n/p \rceil$.
 \item Set 2 (S2) -- This set contains eight instances by \citet{Scaparraetal2004}, derived from two different graphs containing either $100$ or $150$ vertices and with non-Euclidean integer distances. From each graph, four instances with homogeneous capacities and four with heterogeneous capacities were created by selecting $p \in \{5,15\}$. 
 \item Set 3 (S3) -- Proposed by \citet{LorenaSenne2004} for the CPMP, this set contains six instances with $n$ varying from $100$ to $402$ and $p$ from $10$ to $40$, with 
 homogeneous capacities equal to $\Big\lceil {\sum_j q_j}/{(\tau p)} \Big\rceil$ and $\tau \in \{0.8,0.9\}$. The distances are Euclidean. \citet{AlbaredaSambola2010} considered floating-point values, while \citet{QuevedoOrozco2015} rounded down to the nearest integer.
 \item Set 4 (S4) -- This set contains five instances by \citet{LorenaSenne2004}, obtained by modifying the {\em Pcb3038} instance of the TSPLIB, varying $p$ in $\{600, 700, 800, 900,$ $1000\}$. The facility capacities are homogeneous and set to $\Big\lceil {\sum_j q_j}/{(\tau p)} \Big\rceil$, with $\tau \in \{0.8,0.9\}$.
 \item New set (KIV) -- We generated 280 new instances that are similar to S1, but have either $n \in \{300, 500,1000, 2000, 3000\}$ and integer distances, or $n \in \{50, 100, 150, 200,$ $300, 500, 1000, 2000, 3000\}$ and floating-point distances. {As in S1, the demand values were randomly generated within the interval $[1,20]$.} To obtain a customer density similar to that of \citet{Scaparraetal2004}, we randomly generated the vertex coordinates in $[1, \sqrt{100n}]$. We created 20 instances for each value of $n$, dividing them into four groups of five instances each, with $p$ equal to $\left\lfloor\frac{n}{10} \right\rfloor$, $\left\lfloor\frac{n}{7}\right\rfloor$, $\left\lfloor\frac{n}{4}\right\rfloor$, or $\left\lfloor\frac{n}{3}\right\rfloor$. These instances are available at \url{http://www.or.unimore.it/site/home/online-resources.html}.
\end{itemize}
}

\subsection{Evaluation of the solution techniques for the CSCP subproblem} \label{sec:eval-formulations}

Our first experiment focused on the formulations and valid inequalities used to solve the CSCP subproblem. 
Recall that the subproblem seeks to minimize the number of facilities needed to cover the demands subject to a radius limit $r$, and that finding any lower bound greater than $p$, or a feasible solution using up to $p$ facilities, immediately terminates the subproblem execution.
For this experiment, we used S1 and considered $r= \text{BLB}-1$, where BLB is the best known lower bound from the literature, collected from \citet{QuevedoOrozco2015}. {Based on our computational experiments, we know that this BLB matches the optimal radius in 150 of the 160 instances.
This choice of $r$ is particularly relevant because the solution of the associated subproblem usually constitutes the last and hardest infeasible iteration. A formulation that produces good {\em linear programming} (LP) relaxation values in reasonable times is clearly preferable.}\\

\noindent \textbf{Choice of the CSCP formulation.}
{Table \ref{tab:evalCSCP} presents the results of our first experiment, which compares the descriptive and arc-flow formulations (CSCP-$r$ and CSCP-AF-$r$, respectively), with and without additional inequalities. Each row refers to a group of ten instances with the same  $n$ and $p$ values. The left part of the table presents the results of the LP relaxations of the ``plain'' formulations, obtained by disregarding the inequalities of Section \ref{sec:improvements}, namely, \eqref{eq:CSCPinit}--\eqref{eq:CSCPend} for CSCP-$r$ and \eqref{eq:AFinit}--\eqref{eq:CSCP2}, \eqref{eq:CSCP6}, \eqref{eq:CSCPend} for CSCP-AF-$r$. The right part shows the results achieved by adding the inequalities \eqref{eq:CSCP3}--\eqref{eq:ineq5} and \eqref{ourSSineq} to the plain formulations, thus obtaining the ``full'' formulations. Column ``LB'' gives the average LP relaxation lower bound, ``$> p$'' reports the number of instances for which $LB>p$, ``T(s)'' gives the average CPU seconds required to solve the LP models, and ``gap(\%)'' is the percentage gap between the two LP relaxations (computed as $100 \times ({LB}_B - LB_A)/LB_A$, where $LB_A$ and $LB_B$ are the LP bounds of CSCP-AF-$r$ and CSCP-$r$, respectively).}

\begin{table}[htbp]
\centering
\setlength{\tabcolsep}{4.4pt}
\renewcommand{\arraystretch}{1.3}
\caption{Comparison of the linear relaxations of plain and full CSCP-AF-$r$ and CSCP-$r$ formulations.}
{
\begin{adjustbox}{max width=\textwidth}
\begin{tabular}{rrrrrrrrrrrrrrrrrrrrr}
\toprule
\multicolumn{ 3}{c}{} &  & \multicolumn{ 8}{c}{plain formulations (LP)} & \multicolumn{1}{l}{} & \multicolumn{ 8}{c}{full formulations (LP)} \\ \cline{5-12} \cline{14-21}
\multicolumn{ 3}{c}{\uppp{instance}} &  & \multicolumn{ 3}{c}{CSCP-AF-$r$} & \multicolumn{1}{l}{} & \multicolumn{ 3}{c}{CSCP-$r$} && \multicolumn{1}{l}{} & \multicolumn{ 3}{c}{CSCP-AF-$r$} & \multicolumn{1}{l}{} & \multicolumn{3}{c}{CSCP-$r$} \\ \cline{1-3} \cline{5-7} \cline{9-11} \cline{14-16} \cline{18-20}
$n$ & $p$ & \# &  & LB & $> p$ & T(s) &  & LB & $> p$ & T(s) & \uppp{gap(\%)} &  & LB & $> p$ & T(s) &  & LB & $> p$ & T(s) & \uppp{gap(\%)} \\ 
 \cline{1-3} \cline{5-12} \cline{14-21}
 50 & 5 & 10 &  & 4.69 & 2 & 2.65 &  & 4.46 & 0 & 0.07 & -4.90 &  & 5.28 & 8 & 3.77 &  & 5.28 & 8 & 0.44 & 0.00 \\ 
100 & 10 & 10 &  & 9.22 & 0 & 5.51 &  & 8.82 & 0 & 0.14 & -4.29 &  & 10.10 & 5 & 12.61 &  & 10.10 & 5 & 1.50 & 0.00 \\ 
150 & 15 & 10 &  & 13.72 & 0 & 13.44 &  & 12.92 & 0 & 0.21 & -5.83 &  & 15.67 & 7 & 25.94 &  & 15.67 & 7 & 3.18 & 0.00 \\ 
200 & 20 & 10 &  & 18.40 & 0 & 28.58 &  & 17.58 & 0 & 0.33 & -4.49 &  & 20.41 & 7 & 59.14 &  & 20.41 & 7 & 7.75 & -0.01 \\ 
 \cline{1-3} \cline{5-12} \cline{14-21}
50 & 12 & 10 &  & 11.75 & 2 & 0.17 &  & 10.70 & 0 & 0.02 & -9.00 &  & 12.04 & 6 & 0.51 &  & 12.00 & 6 & 0.16 & -0.32 \\ 
100 & 25 & 10 &  & 24.18 & 2 & 0.34 &  & 22.05 & 0 & 0.04 & -8.78 &  & 25.13 & 6 & 1.15 &  & 25.05 & 5 & 0.29 & -0.30 \\ 
150 & 37 & 10 &  & 36.21 & 2 & 0.52 &  & 31.64 & 0 & 0.07 & -12.61 &  & 38.13 & 10 & 1.91 &  & 37.95 & 10 & 0.43 & -0.46 \\ 
200 & 50 & 10 &  & 49.34 & 2 & 0.90 &  & 43.95 & 0 & 0.11 & -10.92 &  & 51.25 & 9 & 3.36 &  & 51.02 & 8 & 0.82 & -0.45 \\ 
 \cline{1-3} \cline{5-12} \cline{14-21}
50 & 16 & 10 &  & 15.78 & 2 & 0.08 &  & 14.07 & 0 & 0.01 & -10.82 &  & 16.12 & 6 & 0.30 &  & 16.03 & 5 & 0.15 & -0.54 \\ 
100 & 33 & 10 &  & 32.65 & 3 & 0.19 &  & 28.61 & 0 & 0.03 & -12.36 &  & 33.30 & 4 & 0.66 &  & 32.88 & 4 & 0.24 & -1.25 \\ 
150 & 50 & 10 &  & 49.75 & 3 & 0.31 &  & 43.07 & 0 & 0.06 & -13.42 &  & 51.08 & 9 & 0.97 &  & 50.50 & 6 & 0.36 & -1.13 \\ 
200 & 66 & 10 &  & 67.16 & 7 & 0.54 &  & 57.01 & 0 & 0.10 & -15.11 &  & 68.99 & 9 & 1.55 &  & 68.16 & 8 & 0.61 & -1.20 \\ 
 \cline{1-3} \cline{5-12} \cline{14-21}
50 & 20 & 10 &  & 19.86 & 4 & 0.07 &  & 17.83 & 0 & 0.01 & -10.24 &  & 20.12 & 6 & 0.36 &  & 19.58 & 1 & 0.27 & -2.69 \\ 
100 & 40 & 10 &  & 39.83 & 2 & 0.15 &  & 35.29 & 0 & 0.03 & -11.40 &  & 40.19 & 4 & 0.63 &  & 39.07 & 2 & 0.33 & -2.77 \\ 
150 & 60 & 10 &  & 59.77 & 3 & 0.27 &  & 51.68 & 0 & 0.06 & -13.53 &  & 60.71 & 5 & 0.96 &  & 58.72 & 3 & 0.56 & -3.28 \\ 
200 & 80 & 10 &  & 81.58 & 8 & 0.44 &  & 70.32 & 0 & 0.09 & -13.81 &  & 82.92 & 8 & 1.86 &  & 80.30 & 7 & 0.93 & -3.16 \\ 
 \cline{1-3} \cline{5-12} \cline{14-21}
\multicolumn{ 3}{c}{avg/sum} &  & 33.37 & 42 & 3.39 &  & 29.38 & 0 & 0.09 & -10.10 &  & 34.46 & 109 & 7.23 &  & 33.92 & 92 & 1.13 & -1.10 \\ 
\bottomrule
\end{tabular}
\end{adjustbox}
}
\label{tab:evalCSCP}
\end{table}

{From Table \ref{tab:evalCSCP}, we first observe that the plain CSCP-AF-$r$ provides significantly better LP bounds than the plain CSCP-$r$. For 42 instances, infeasibility is proven by simply solving the plain CSCP-AF-$r$ LP. The drawback of CSCP-AF-$r$ is its higher CPU time consumption, especially on instances with large~$n/p$ ratio. The full CSCP-AF-$r$ also provides better lower bounds than the full CSCP-$r$. However, the CSCP-$r$ benefits more from the additional inequalities: the gap  between the two bounds reduces from $-10.10\%$ to $-1.10\%$ when the additional inequalities are included. The full CSCP-$r$ LP bound is sufficient to prove CPCP infeasibility for 92 instances, compared to 109 instances for the full CSCP-AF-$r$. However, solving the full CSCP-AF-$r$ LP requires six times the CPU effort required by the full CSCP-$r$ LP. Given the similar performance of the two full formulations, but their significant difference in terms of time, we decided to invoke the CSCP-$r$ at each iteration of our search algorithms.}\\

\noindent
\textbf{Impact of the valid inequalities.}
{Our second experiment, reported in Table~\ref{tab:evalLBs}, analyzes in detail the impact of the inequalities from Section \ref{sec:improvements}. In the left part of the table, column ``LB'' provides the average LP relaxation value of the plain CSCP-$r$. Each successive column under the label ``gap(\%)'' reports the gap between LB and the LP relaxation value of the plain CSCP-$r$ after the inclusion of the indicated inequality. Similarly, each column in the right part of the table shows the gap(\%) from the average LP relaxation value of the full CSCP-$r$ when removing the indicated inequality.}
{From Table \ref{tab:evalLBs}, we observe that the addition of inequalities \eqref{eq:ineq3} to the plain \mbox{CSCP-$r$}  does not improve the LP bound, but their removal from the full formulation decreases the LP bound, even if only slightly. All the other inequalities have a visible impact on the LB when added to the plain formulation or removed from the full formulation. The well-known inequalities \eqref{eq:CSCP3} are essential for the quality of the LB, as are the new disjunctive subset sum inequalities \eqref{ourSSineq}. Inequalities \eqref{eq:ineq3}, \eqref{eq:ineq5}, and \eqref{ourSSineq} have a larger impact on instances with small $n/p$ values, while inequalities \eqref{eq:CSCP3} contribute more on instances with large $n/p$ values.}\\
\begin{table}[htbp]
\centering
\setlength{\tabcolsep}{2.5pt}
\renewcommand{\arraystretch}{1.3}
\caption{Impact of valid inequalities on linear relaxation of CSCP-$r$ formulation.}
{
\begin{adjustbox}{max width=\textwidth}
\begin{tabular}{rrrc@{\hspace*{7pt}}cccccccc@{\hspace*{7pt}}ccccccc}
\toprule
\multicolumn{ 3}{c}{} &  & \multicolumn{ 7}{c}{plain CSCP-$r$ (LP)} &  & \multicolumn{ 7}{c}{full CSCP-$r$ (LP)} \\ \cline{5-11} \cline{13-19}
\multicolumn{ 3}{c}{\uppp{instance}} &  &  & \multicolumn{ 7}{c}{gap(\%)} &  &  & \multicolumn{ 4}{c}{gap(\%)} \\ \cline{1-3} \cline{6-11} \cline{14-19}
$n$ & $p$ & \# &  & {LB} & $+$\eqref{eq:CSCP3} & $+$\eqref{eq:ineq3}$^{a}$ & $+$\eqref{eq:ineq3}$^{b}$ & $+$\eqref{eq:ineq5}$^{a}$ & $+$\eqref{eq:ineq5}$^{b}$ & $+$\eqref{ourSSineq} &  & {LB} & $-$\eqref{eq:CSCP3} & $-$\eqref{eq:ineq3}$^{c}$ & $-$\eqref{eq:ineq3}$^{d}$ & $-$\eqref{eq:ineq5}$^{c}$ & $-$\eqref{eq:ineq5}$^{d}$ & $-$\eqref{ourSSineq} \\ 
\cline{1-3} \cline{5-11} \cline{13-19}
50 & 5 & 10 &  & 4.46 & 18.50 & 0.00 & 0.00 & 0.00 & 0.00 & 2.47 &  & 5.28 & -7.20 & 0.00 & 0.00 & 0.00 & 0.00 & -0.03 \\ 
100 & 10 & 10 &  & 8.82 & 14.46 & 0.00 & 0.00 & 0.00 & 0.00 & 3.99 &  & 10.10 & -4.91 & -0.01 & -0.01 & 0.00 & 0.00 & -0.02 \\ 
150 & 15 & 10 &  & 12.92 & 21.25 & 0.00 & 0.00 & 0.00 & 0.00 & 5.98 &  & 15.67 & -7.09 & -0.01 & -0.01 & 0.00 & 0.00 & -0.04 \\ 
200 & 20 & 10 &  & 17.58 & 16.00 & 0.00 & 0.00 & 0.00 & 0.00 & 3.61 &  & 20.41 & -5.81 & 0.00 & 0.00 & 0.00 & 0.00 & -0.06 \\ 
\cline{1-3} \cline{5-11} \cline{13-19}
50 & 12 & 10 &  & 10.70 & 10.01 & 0.00 & 0.00 & 0.49 & 0.90 & 5.47 &  & 12.00 & -1.09 & -0.04 & -0.05 & -0.48 & -0.48 & -0.91 \\ 
100 & 25 & 10 &  & 22.05 & 10.13 & 0.00 & 0.00 & 0.35 & 1.84 & 5.49 &  & 25.05 & -1.12 & -0.01 & -0.03 & -1.90 & -1.90 & -0.56 \\ 
150 & 37 & 10 &  & 31.64 & 15.65 & 0.00 & 0.00 & 0.61 & 2.24 & 8.13 &  & 37.95 & -2.04 & -0.02 & -0.03 & -2.20 & -2.20 & -0.64 \\ 
200 & 50 & 10 &  & 43.95 & 11.93 & 0.00 & 0.00 & 0.97 & 1.85 & 6.41 &  & 51.02 & -1.57 & -0.02 & -0.04 & -1.92 & -1.92 & -0.88 \\ 
\cline{1-3} \cline{5-11} \cline{13-19}
50 & 16 & 10 &  & 14.07 & 10.22 & 0.00 & 0.00 & 1.35 & 1.42 & 6.15 &  & 16.03 & -0.21 & -0.03 & -0.03 & -1.31 & -1.43 & -1.27 \\ 
100 & 33 & 10 &  & 28.61 & 10.81 & 0.00 & 0.00 & 1.81 & 2.43 & 6.85 &  & 32.88 & -0.64 & -0.04 & -0.08 & -1.34 & -1.40 & -1.43 \\ 
150 & 50 & 10 &  & 43.07 & 10.81 & 0.00 & 0.00 & 2.21 & 3.19 & 7.06 &  & 50.50 & -0.59 & -0.06 & -0.07 & -2.14 & -2.22 & -2.06 \\ 
200 & 66 & 10 &  & 57.01 & 12.11 & 0.00 & 0.00 & 2.66 & 3.74 & 8.33 &  & 68.16 & -0.57 & -0.06 & -0.10 & -2.34 & -2.43 & -1.83 \\ 
\cline{1-3} \cline{5-11} \cline{13-19}
50 & 20 & 10 &  & 17.83 & 4.23 & 0.00 & 0.00 & 1.70 & 2.00 & 6.21 &  & 19.58 & -0.06 & -0.04 & -0.18 & -0.86 & -0.94 & -3.08 \\ 
100 & 40 & 10 &  & 35.29 & 4.48 & 0.00 & 0.00 & 2.06 & 2.65 & 6.24 &  & 39.07 & -0.15 & -0.06 & -0.14 & -1.01 & -1.27 & -3.06 \\ 
150 & 60 & 10 &  & 51.68 & 7.00 & 0.00 & 0.00 & 2.41 & 3.21 & 6.61 &  & 58.72 & -0.13 & -0.13 & -0.17 & -1.26 & -1.49 & -2.84 \\ 
200 & 80 & 10 &  & 70.32 & 7.07 & 0.00 & 0.00 & 3.02 & 3.31 & 7.12 &  & 80.30 & -0.21 & -0.10 & -0.18 & -1.27 & -1.56 & -3.07 \\ 
\cline{1-3} \cline{5-11} \cline{13-19}
\multicolumn{ 3}{c}{average} &  & - & 11.54 & 0.00 & 0.00 & 1.23 & 1.80 & 6.01 &  & - & -2.09 & -0.04 & -0.07 & -1.13 & -1.20 & -1.36 \\ 
\bottomrule
\multicolumn{17}{l}{\small $^{a}$ $|S| \in \{1,2\}$. $^{b}$ $|S| \in \{1,2,3\}$. $^{c}$ $|S| \in \{2,3\}$. $^{d}$ $|S| \in \{1,2,3\}$.}
\end{tabular}
\end{adjustbox}
}
\label{tab:evalLBs}
\end{table}

\noindent
\textbf{Impact of the symmetry-breaking inequalities.}
The impact of inequalities \eqref{eq:ineq1}, \eqref{eq:ineq2}, and \eqref{eq:ineq4} is not shown in Table \ref{tab:evalLBs}, because they do not enhance the linear relaxation but rather help to reduce the search space. We thus evaluated their impact by computing the number of nodes explored in the subproblems. For this experiment, we selected the instances from S1 that were not proved to be infeasible after the solution of the LP relaxation of the full CSCP-$r$ (68 instances in total) and solved them with our models, terminating upon proven infeasibility or when a time limit of 600 seconds was reached. Table \ref{tab:nb-nodes} reports the average number of nodes explored by GUROBI when solving the plain and full CSCP-$r$ formulations (columns 4 and 8, respectively); the formulations obtained by adding inequalities \eqref{eq:ineq1}, \eqref{eq:ineq2}, or~\eqref{eq:ineq4} one at a time to the plain CSCP-$r$ (columns 5--7); and the formulations obtained by removing these inequalities one at a time from the full CSCP-$r$ (columns 9--11).
We observe a significant decrease in the number of nodes explored when using the full CPCP-$r$ instead of the plain version. 
The contribution of the three inequalities is not always relevant when they are added one at a time to the plain formulation. 
One instance in particular (with $n$=100 and $p$=25) could not be proven infeasible within the time limit, either by the plain CSCP-$r$ or by adding \eqref{eq:ineq2}. However, the number of nodes explored increased significantly with the use of inequalities \eqref{eq:ineq2}. 
Their removal from the full formulation is detrimental, since it significantly increases the number of nodes explored.

\begin{table}[htbp]
\centering\setlength{\tabcolsep}{6.5pt}
\renewcommand{\arraystretch}{1.3}
\centering
\caption{Number of nodes explored to prove infeasibility with CSCP-$r$ formulations}
{
\begin{adjustbox}{max width=\textwidth}
\begin{tabular}{rrr@{\hspace*{5pt}}lrrrr@{\hspace*{5pt}}lrrrr}
\toprule
\multicolumn{ 3}{c}{instance} & & \multicolumn{4}{c}{explored nodes} &  & \multicolumn{4}{c}{explored nodes}  \\ 
\cline{ 1- 3} \cline{5-8} \cline{10-13}
$n$ & $p$ & \# &  & {plain} & {$+$\eqref{eq:ineq1}} & {$+$\eqref{eq:ineq2}} & {$+$\eqref{eq:ineq4}} &  & {full} & {$-$\eqref{eq:ineq1}} & {$-$\eqref{eq:ineq2}} & {$-$\eqref{eq:ineq4}} \\ 
\cline{ 1- 3} \cline{5-8} \cline{10-13}
50 & 5 & 2 &  & 22.00 & 10.00 & 64.50 & 52.00 &  & 1.00 & 1.00 & 1.00 & 1.00 \\ 
100 & 10 & 5 &  & 4,024.20 & 607.60 & 1,737.80 & 3,402.40 &  & 310.00 & 244.20 & 305.20 & 191.60 \\ 
150 & 15 & 3 &  & 17,512.33 & 4,820.33 & 10,270.00 & 9,460.33 &  & 1,482.00 & 9,927.00 & 1,648.67 & 1,299.00 \\ 
200 & 20 & 3 &  & 3,891.00 & 1,071.00 & 4,100.67 & 2,656.00 &  & 285.33 & 200.33 & 89.33 & 1.00 \\
\cline{ 1- 3} \cline{5-8} \cline{10-13}
50 & 12 & 4 &  & 996.50 & 112.00 & 759.25 & 126.00 &  & 1.00 & 1.75 & 1.00 & 1.00 \\ 
100 & 25 & 5 &  & 46,141.80 & 12,563.00 & 111,176.80 & 33,091.60 &  & 6,017.00 & 28,510.40 & 19,308.80 & 11,141.80 \\ 
150 & 37 & 0 &  & - & - & - & - &  & - & - & - & - \\ 
200 & 50 & 2 &  & 1.00 & 1.00 & 1.00 & 1.00 &  & 1.00 & 1.00 & 1.00 & 1.00 \\ 
\cline{ 1- 3} \cline{5-8} \cline{10-13}
50 & 16 & 5 &  & 1,706.00 & 205.20 & 2,237.20 & 3,183.40 &  & 1.00 & 3.20 & 1.00 & 1.00 \\ 
100 & 33 & 6 &  & 625.00 & 637.33 & 695.50 & 494.50 &  & 1.00 & 18.83 & 1.00 & 1.00 \\ 
150 & 50 & 4 &  & 2,110.25 & 1,002.50 & 2,832.00 & 3,606.50 &  & 144.50 & 162.00 & 138.00 & 376.00 \\ 
200 & 66 & 2 &  & 420.50 & 521.00 & 263.50 & 266.00 &  & 1.00 & 1.00 & 1.00 & 1.00 \\ 
\cline{ 1- 3} \cline{5-8} \cline{10-13}
50 & 20 & 9 &  & 2,515.78 & 2,767.89 & 2,000.67 & 1,375.11 &  & 155.44 & 71.67 & 474.56 & 138.56 \\ 
100 & 40 & 8 &  & 1,637.63 & 397.75 & 340.25 & 1,834.88 &  & 10.88 & 107.25 & 19.00 & 65.50 \\ 
150 & 60 & 7 &  & 691.71 & 342.43 & 410.14 & 670.71 &  & 87.00 & 90.00 & 85.00 & 85.57 \\ 
200 & 80 & 3 &  & 380.67 & 384.00 & 365.67 & 363.00 &  & 1.00 & 1.00 & 1.00 & 1.00 \\ 
\cline{ 1- 3} \cline{5-8} \cline{10-13}
\multicolumn{ 3}{c}{average} &  & 5,511.76 & 1,696.20 & 9,150.33 & 4,038.90 &  & 566.61 & 2,622.71 & 1,471.70 & 887.07 \\
\bottomrule
\end{tabular}
\end{adjustbox}
}
\label{tab:nb-nodes}
\vspace{-.5cm}
\end{table}

\subsection{Performance of the search algorithms for the CPCP} \label{sec:search-results}

Based on the results of Section \ref{sec:eval-formulations}, we decided to integrate the full CSCP-$r$, composed of  \eqref{eq:CSCPinit}--\eqref{eq:ineq5} and \eqref{ourSSineq}, into the search algorithms presented in Section \ref{sec:method}. {We now evaluate the performance of the resulting search approaches for the CPCP. We set the initial search interval for the radius $r$ to $[0,\text{UB}_\text{best}]$, where UB$_\text{best}$ is the best upper bound found by the heuristic of \citet{QuevedoOrozco2015} and by the ILS that we described at the end of Section \ref{sec:method}.}
Table \ref{tab:comp-lit} compares the results obtained by the proposed algorithms with those of recent state-of-the-art heuristic and exact approaches, on the classical instances from the literature, namely: 
\begin{itemize}[leftmargin=2cm]
\setlength\itemsep{.035cm}
\item[SPS --] the multi-exchange neighborhood search algorithm of \citet{Scaparraetal2004};
\item[OP --] the binary search of \citet{Ozsoy2006} that solves plain CSCP models at each iteration;
\item[ADF --] the Lagrangian relaxation method of \citet{AlbaredaSambola2010}; 
\item[QR --] the iterated greedy local search metaheuristic of \citet{QuevedoOrozco2015}, including supplementary results available from a personal communication.
\end{itemize}
{For our search algorithms,} we tested sequential search (SS), L-layered search with $L$=2, 3, or 4 (L2, L3, L4), and binary search (BS). 
Each row in the table reports the results for a subset of instances.  For each algorithm and subset, we report the number of proven optimal solutions ``\#opt'', the average gap ``gap(\%)'', and the average computing time in seconds ``T(s)''. For the exact algorithms (all but SPS and QR),  gap(\%) is evaluated as $100\cdot(\text{UB}-\text{LB})/\text{UB}$, where $\text{UB}$ and $\text{LB}$ are the upper and lower bounds produced by the given algorithm, and \#opt is the number of times for which LB=UB. For the heuristics (SPS and QR), gap(\%) and \#opt are computed by considering the best known LB values.
{The results for SPS, OP, ADF, and QR were taken from \citet{QuevedoOrozco2015}, who ran the algorithms on a 2.0GHz AMD Opteron processor with 32GB of RAM. As they did not solve the S3 instances with float distances, the ADF results for these instances were collected from \citet{AlbaredaSambola2010}, who used a  2.39GHz Intel Pentium 4 processor with 512MB of RAM.
}

{From Table \ref{tab:comp-lit}, we can observe that the results of SS, L2, L3, L4, and BS on the set S1 are better than those of the previous literature for the majority of the instances. 
The ten instances with $n=200$ and $p=20$ are solved to optimality only by OP, SS, and L3, while those with $n=200$ and $p=50$ are solved only by SS. Instances from S2 and S3-int are solved by all the exact methods. Of the six instances from S3-float, five are solved to proven optimality by L2, L3, L4 and BS, whereas SS can solve only two. The five instances from S4, which involve more than 3000 vertices, are not solved to proven optimality by any of the algorithms. This is also the only set for which the average gap of our L-layered search algorithms is large, although it is still smaller than that of SPS, ADF, and QR (that for OP is not computed because the memory exceeded the limit imposed by the authors, and no upper bound could be found). The average CPU time of SS, L2, L3, L4, and BS is smaller than that of OP and ADF, due to more generous time limits for the latter approaches. Despite this difference, SS, L2, L3, L4, and BS are all able to produce better results. Between them, our five algorithms were able prove optimality for 179 of the 185 benchmark instances, including 13 open cases.}

\begin{landscape}
\begin{table}[t]
\centering
\setlength{\tabcolsep}{3.15pt}
\renewcommand{\arraystretch}{1.85}
\vspace*{0.4cm}
\caption{Evaluation of existing (SPS, OP, ADF, QR) and new (SS, L2, L3, L4, BS) exact CPCP algorithms on classical benchmark sets.}
\scalebox{.78}
{
\hspace{-.5cm}
\begin{tabular}{@{\hspace*{0.0pt}}l@{\hspace*{0.2pt}}rrrrrr@{\hspace*{1.8pt}}r@{\hspace*{1.8pt}}rrrrrrrrrrrrrrrr@{\hspace*{0.0pt}}r@{\hspace*{0.0pt}}rr@{\hspace*{2pt}}rrrrrrr@{\hspace*{0.0pt}}}
\toprule
\multicolumn{ 4}{c}{instance} & \multicolumn{ 9}{c}{\#opt} &  &  & \multicolumn{ 9}{c}{gap(\%)} &  & \multicolumn{ 9}{c}{T(s)} \\ 
\cline{1-4} \cline{6-14} \cline{16-24} \cline{26-34}
set & $n$ & $p$ & \# & & SPS & OP & ADF & QR & SS & L2 & L3 & L4 & BS & & SPS & OP & ADF & QR & SS & L2 & L3 & L4 & BS & & SPS & OP & ADF & QR & SS & L2 & L3 & L4 & BS \\ 
\cline{1-4} \cline{6-14} \cline{16-24} \cline{26-34}
S1 & 50 & 5 & 10 &  & 7 & 10 & 10 & 10 & 10 & 10 & 10 & 10 & 10 &  & 1.88 & 0 & 0 & 0 & 0 & 0 & 0 & 0 & 0 &  & 1.2 & 2.9 & 1.4 & 0.2 & 0.8 & 1.2 & 0.9 & 0.7 & 0.4 \\ 
 &  & 12 & 10 &  & 2 & 10 & 10 & 3 & 10 & 10 & 10 & 10 & 10 &  & 5.39 & 0 & 0 & 6.64 & 0 & 0 & 0 & 0 & 0 &  & 2.4 & 5.5 & 9.7 & 0.4 & 0.9 & 1.4 & 1.5 & 1.1 & 1.7 \\ 
 &  & 16 & 10 &  & 3 & 10 & 10 & 2 & 10 & 10 & 10 & 10 & 10 &  & 2.87 & 0 & 0 & 8.97 & 0 & 0 & 0 & 0 & 0 &  & 2.8 & 21.3 & 82.8 & 0.6 & 0.7 & 0.8 & 1.4 & 1.3 & 0.9 \\ 
 &  & 20 & 10 &  & 2 & 10 & 8 & 0 & 10 & 10 & 10 & 10 & 10 &  & 3.59 & 0 & 2.14 & 21.28 & 0 & 0 & 0 & 0 & 0 &  & 1.9 & 86.5 & 1103.1 & 0.7 & 2.6 & 4.8 & 4.8 & 3.9 & 2.9 \\ \hline
 & 100 & 10 & 10 &  & 0 & 10 & 10 & 8 & 10 & 10 & 10 & 10 & 10 &  & 8.01 & 0 & 0 & 1.30 & 0 & 0 & 0 & 0 & 0 &  & 9.5 & 62.4 & 59.4 & 0.6 & 4.0 & 6.8 & 6.4 & 4.8 & 4.2 \\ 
 &  & 25 & 10 &  & 1 & 9 & 8 & 3 & 10 & 10 & 10 & 10 & 10 &  & 12.97 & 1.33 & 3.21 & 9.08 & 0 & 0 & 0 & 0 & 0 &  & 9.7 & 423.4 & 975.6 & 1.4 & 29.1 & 22.1 & 26.0 & 35.0 & 35.1 \\ 
 &  & 33 & 10 &  & 0 & 9 & 7 & 2 & 10 & 10 & 10 & 10 & 10 &  & 10.34 & 1.67 & 2.32 & 9.46 & 0 & 0 & 0 & 0 & 0 &  & 15.6 & 377.6 & 1380.4 & 2.1 & 4.7 & 8.4 & 9.9 & 8.3 & 6.8 \\ 
 &  & 40 & 10 &  & 1 & 10 & 6 & 0 & 10 & 10 & 10 & 10 & 10 &  & 9.89 & 0 & 3.93 & 35.82 & 0 & 0 & 0 & 0 & 0 &  & 30.8 & 191.0 & 2021.3 & 2.4 & 3.4 & 3.8 & 6.6 & 8.7 & 5.8 \\ \hline
 & 150 & 15 & 10 &  & 0 & 10 & 10 & 4 & 10 & 10 & 10 & 10 & 10 &  & 13.74 & 0 & 0 & 3.54 & 0 & 0 & 0 & 0 & 0 &  & 26.6 & 154.4 & 165.1 & 1.0 & 11.2 & 16.7 & 30.0 & 30.7 & 14.4 \\ 
 &  & 37 & 10 &  & 0 & 9 & 9 & 0 & 10 & 10 & 10 & 10 & 10 &  & 18.14 & 1.67 & 0.83 & 12.31 & 0 & 0 & 0 & 0 & 0 &  & 46.9 & 386.0 & 532.5 & 2.6 & 6.5 & 16.1 & 34.1 & 37.6 & 12.4 \\ 
 &  & 50 & 10 &  & 0 & 9 & 7 & 0 & 10 & 10 & 10 & 10 & 10 &  & 16.1 & 1.67 & 3.50 & 22.84 & 0 & 0 & 0 & 0 & 0 &  & 34.6 & 560.9 & 1834.6 & 5.5 & 1.9 & 4.7 & 6.0 & 10.8 & 4.0 \\ 
 &  & 60 & 10 &  & 0 & 9 & 6 & 0 & 10 & 10 & 10 & 10 & 10 &  & 16.2 & 3.57 & 4.21 & 38.47 & 0 & 0 & 0 & 0 & 0 &  & 51.5 & 914.8 & 1908.3 & 7.2 & 4.8 & 10.1 & 12.3 & 22.8 & 11.6 \\ \cline{1-4} \cline{6-14} \cline{16-24} \cline{26-34}
 & 200 & 20 & 10 &  & 0 & 10 & 9 & 3 & 10 & 9 & 10 & 9 & 9 &  & 18.8 & 0 & 0.67 & 5.26 & 0 & 1.25 & 0 & 2.50 & 1.88 &  & 63.6 & 237.5 & 625.9 & 1.6 & 31.6 & 75.6 & 66.8 & 72.5 & 68.4 \\ 
 &  & 50 & 10 &  & 0 & 6 & 4 & 0 & 10 & 9 & 8 & 9 & 9 &  & 21.08 & 4.22 & 11.20 & 22.24 & 0 & 4.62 & 4.00 & 1.00 & 2.50 &  & 46.8 & 1145.3 & 2469.8 & 6.5 & 107.3 & 110.5 & 158.9 & 186.0 & 163.1 \\ 
 &  & 66 & 10 &  & 0 & 9 & 4 & 0 & 10 & 10 & 10 & 10 & 10 &  & 23.13 & 2.22 & 8.76 & 31.23 & 0 & 0 & 0 & 0 & 0 &  & 55.3 & 846.2 & 2163.5 & 9.8 & 9.3 & 45.9 & 26.4 & 34.1 & 27.9 \\ 
 &  & 80 & 10 &  & 0 & 7 & 5 & 0 & 10 & 10 & 10 & 9 & 10 &  & 23.91 & 7.17 & 9.05 & 36.53 & 0 & 0 & 0 & 4.74 & 0 &  & 94.4 & 560.3 & 1800.6 & 11.9 & 36.9 & 33.6 & 69.6 & 69.0 & 64.1 \\ 
 \cline{1-4} \cline{6-14} \cline{16-24} \cline{26-34}
\multicolumn{3}{l}{S2} & 8 &  & 0 & 8 & 8 & 1 & 8 & 8 & 8 & 8 & 8 &  & 6.21 & 0 & 0 & 3.25 & 0 & 0 & 0 & 0 & 0 &  & 27.3 & 152.7 & 219.4 & 0.8 & 51.9 & 64.0 & 99.3 & 57.6 & 72.0 \\ 
\multicolumn{3}{l}{S3-int} & 6 &  & 0 & 6 & 6 & 0 & 6 & 6 & 6 & 6 & 6 &  & 23.96 & 0 & 0 & 3.57 & 0 & 0 & 0 & 0 & 0 &  & 285.4 & 319.3 & 256.6 & 2.4 & 143.4 & 65.6 & 84.5 & 61.5 & 57.1 \\ 
\multicolumn{3}{l}{S3-float} & 6 &  & -- & -- & 4 & -- & 2 & 5 & 5 & 5 & 5 &  & -- & -- & 0.55 & -- & 23.38 & 0.18 & 0.04 & 0.02 & 0.00 &  & -- & -- & 1201.5 & -- & 555.5 & 242.6 & 190.2 & 150.1 & 150.0 \\ 
\multicolumn{3}{l}{S4}  & 5 &  & 0 & 0 & 0 & 0 & 0 & 0 & 0 & 0 & 0 &  & 75.61 & -- & 60.94 & 53.14 & 62.38 & 28.28 & 15.35 & 11.95 & 14.6 &  & 1877.6 & -- & 21600.0 & 350.4 & 600.0 & 600.0 & 600.0 & 600.0 & 600.0 \\ 
\cline{1-4} \cline{6-14} \cline{16-24} \cline{26-34}
\multicolumn{3}{c}{avg/sum} & 185 & & 16 & 161 & 141 & 36 & 176 & 177 & 177 & 176 & 177 & \multicolumn{1}{l}{} & 16.41 & 1.31 & 5.57 & 17.10 & 4.29 & 1.72 & 0.97 & 1.01 & 0.95 & \multicolumn{1}{l}{} & 141.3 & 343.1 & 2020.6 & 21.5 & 80.3 & 66.7 & 71.8 & 69.8 & 65.1 \\ 
\bottomrule
\end{tabular}
\label{tab:comp-lit}
}
\end{table}
\end{landscape}

Overall, it seems that SS performs especially well for small instances with integer distances, while BS behaves better on large instances with floating-point distances. The L-layered search algorithms have a performance that is quite close to the best results on both types of instances. This analysis, however, is restricted because of the limited variety of the existing instances, with few test cases involving large and/or floating-point distances. To fill this experimental gap, we conducted additional experiments on the new KIV test set, reported in Tables \ref{tab:CPCP-results} and \ref{tab:newinstfloat}, with either integer or floating point distances. These two tables have the same format as Table \ref{tab:comp-lit}.

\begin{table}[htbp]
\setlength{\tabcolsep}{3.8pt}
\renewcommand{\arraystretch}{1.3}
\centering
\caption{Performance of search algorithms on KIV instances -- integer distances.}
\begin{adjustbox}{max width=\textwidth}
{
\begin{tabular}{Hrrrlrrrrrlrrrrrlrrrrr}
\toprule
\multicolumn{ 4}{c}{{instance}} & \multicolumn{1}{c}{{}} & \multicolumn{ 5}{c}{{\#opt}} & \multicolumn{1}{c}{{}} & \multicolumn{ 5}{c}{{gap(\%)}} & \multicolumn{1}{c}{{}} & \multicolumn{ 5}{c}{{T(s)}} \\ \cline{1-4} \cline{6-10} \cline{12-16} \cline{18-22}
{set} & {$n$} & {$p$} & {\#} & \multicolumn{1}{r}{{}} & {SS} & {L2} & {L3} & {L4} & {BS} & \multicolumn{1}{r}{{}} & {SS} & {L2} & {L3} & {L4} & {BS} & \multicolumn{1}{r}{{}} & {SS} & {L2} & {L3} & {L4} & {BS} \\ \cline{1-4} \cline{6-10} \cline{12-16} \cline{18-22}
KIV & 300 & 30 & 5 &  & 5 & 5 & 5 & 4 & 5 &  & 0.00 & 0.00 & 0.00 & 0.00 & 4.17 &  & 149.85 & 115.83 & 160.70 & 131.49 & 244.88 \\ 
 &  & 42 & 5 &  & 3 & 3 & 3 & 3 & 3 &  & 8.83 & 6.00 & 2.22 & 3.33 & 4.44 &  & 266.02 & 265.96 & 272.29 & 278.29 & 248.85 \\ 
 &  & 75 & 5 &  & 5 & 5 & 5 & 5 & 5 &  & 0.00 & 0.00 & 0.00 & 0.00 & 0.00 &  & 64.81 & 35.48 & 44.22 & 25.93 & 114.55 \\ 
 &  & 100 & 5 &  & 5 & 5 & 5 & 5 & 5 &  & 0.00 & 0.00 & 0.00 & 0.00 & 0.00 &  & 7.81 & 7.31 & 8.40 & 10.54 & 8.99 \\ \cline{1-4} \cline{6-10} \cline{12-16} \cline{18-22}
 & 500 & 50 & 5 &  & 3 & 3 & 3 & 3 & 3 &  & 4.63 & 8.20 & 5.17 & 6.45 & 10.80 &  & 384.85 & 324.63 & 297.72 & 294.51 & 268.99 \\ 
 &  & 71 & 5 &  & 3 & 3 & 3 & 3 & 3 &  & 8.00 & 14.00 & 10.00 & 2.35 & 2.35 &  & 291.11 & 260.54 & 294.79 & 271.94 & 287.95 \\ 
 &  & 125 & 5 &  & 3 & 3 & 3 & 3 & 3 &  & 17.62 & 6.42 & 4.29 & 5.33 & 22.09 &  & 274.87 & 257.56 & 270.51 & 277.59 & 264.58 \\ 
 &  & 166 & 5 &  & 4 & 3 & 2 & 4 & 4 &  & 10.00 & 16.50 & 18.75 & 2.86 & 19.17 &  & 315.53 & 360.84 & 378.13 & 299.34 & 343.37 \\ \cline{1-4} \cline{6-10} \cline{12-16} \cline{18-22}
 & 1000 & 100 & 5 &  & 1 & 1 & 0 & 0 & 0 &  & 19.06 & 5.00 & 19.33 & 17.91 & 14.61 &  & 520.71 & 598.12 & 600.00 & 600.00 & 600.00 \\ 
 &  & 142 & 5 &  & 0 & 0 & 0 & 0 & 0 &  & 29.12 & 42.41 & 41.82 & 11.11 & 11.50 &  & 600.00 & 600.00 & 600.00 & 600.00 & 600.00 \\ 
 &  & 250 & 5 &  & 2 & 2 & 2 & 2 & 2 &  & 29.17 & 17.51 & 9.11 & 8.19 & 57.56 &  & 408.05 & 423.35 & 449.11 & 433.46 & 410.31 \\ 
 &  & 333 & 5 &  & 1 & 1 & 1 & 1 & 1 &  & 44.05 & 16.00 & 20.29 & 31.29 & 7.69 &  & 490.49 & 491.41 & 500.35 & 538.24 & 497.69 \\ \cline{1-4} \cline{6-10} \cline{12-16} \cline{18-22}
 & 2000 & 200 & 5 &  & 0 & 0 & 0 & 0 & 0 &  & 31.24 & 38.88 & 35.06 & 54.16 & 45.04 &  & 600.00 & 600.00 & 600.00 & 600.00 & 600.00 \\ 
 &  & 285 & 5 &  & 0 & 0 & 0 & 0 & 0 &  & 38.69 & 40.36 & 65.04 & 54.22 & 34.60 &  & 600.00 & 600.00 & 600.00 & 600.00 & 600.00 \\ 
 &  & 500 & 5 &  & 3 & 3 & 3 & 3 & 1 &  & 22.54 & 8.00 & 7.79 & 38.46 & 3.08 &  & 421.82 & 443.65 & 505.54 & 596.94 & 419.01 \\ 
 &  & 666 & 5 &  & 4 & 4 & 4 & 4 & 1 &  & 11.72 & 4.00 & 1.54 & 51.59 & 8.00 &  & 243.88 & 243.29 & 308.17 & 539.54 & 236.96 \\ \cline{1-4} \cline{6-10} \cline{12-16} \cline{18-22}
 & 3000 & 300 & 5 &  & 0 & 0 & 0 & 0 & 0 &  & 33.77 & 41.13 & 47.45 & 54.42 & 45.58 &  & 600.00 & 600.00 & 600.00 & 600.00 & 600.00 \\ 
 &  & 428 & 5 &  & 0 & 0 & 0 & 0 & 0 &  & 42.78 & 54.88 & 65.40 & 54.11 & 45.09 &  & 600.00 & 600.00 & 600.00 & 600.00 & 600.00 \\ 
 &  & 750 & 5 &  & 2 & 1 & 0 & 2 & 0 &  & 35.04 & 12.00 & 12.82 & 70.83 & 10.97 &  & 566.33 & 583.55 & 600.00 & 600.00 & 567.66 \\ 
 &  & 1000 & 5 &  & 4 & 4 & 2 & 4 & 4 &  & 12.14 & 5.33 & 13.50 & 5.00 & 8.57 &  & 340.07 & 350.55 & 575.66 & 305.95 & 383.53 \\ \cline{1-4} \cline{6-10} \cline{12-16} \cline{18-22}
\multicolumn{ 3}{c}{avg/sum} & 100 &  & {48} & 46 & 41 & 46 & 40 &  & 19.92 & {16.83} & 18.98 & 23.58 & 17.77 &  & {387.31} & 388.10 & 413.28 & 410.19 & 394.87 \\ \bottomrule
\end{tabular}
}
\end{adjustbox}
\label{tab:CPCP-results}
\vspace{-.3cm}
\end{table}

\begin{table}[htbp]
\setlength{\tabcolsep}{3.4pt}
\renewcommand{\arraystretch}{1.45}
\centering
\caption{Performance of search algorithms on KIV instances -- floating-point distances.}
\scalebox{.87}
{
\begin{tabular}{Hrrrlrrrrrlrrrrrlrrrrr}
\toprule
\multicolumn{ 4}{c}{instance} & \multicolumn{1}{r}{} & \multicolumn{ 5}{c}{\#opt} & \multicolumn{1}{r}{} & \multicolumn{ 5}{c}{gap(\%)} & \multicolumn{1}{r}{} & \multicolumn{ 5}{c}{T(s)} \\ \cline{1-4} \cline{6-10} \cline{12-16} \cline{18-22}
set & $n$ & $p$ & \# & \multicolumn{1}{r}{} & SS & L2 & L3 & L4 & BS & \multicolumn{1}{r}{} & SS & L2 & L3 & L4 & BS & \multicolumn{1}{r}{} & SS & L2 & L3 & L4 & BS \\ \cline{1-4} \cline{6-10} \cline{12-16} \cline{18-22}
KIV-float & 50 & 5 & 5 &  & 5 & 5 & 5 & 5 & 5 &  & 0.00 & 0.00 & 0.00 & 0.00 & 0.00 &  & 5.32 & 1.28 & 1.05 & 1.39 & 0.99 \\ 
 &  & 7 & 5 &  & 5 & 5 & 5 & 5 & 5 &  & 0.00 & 0.00 & 0.00 & 0.00 & 0.00 &  & 6.10 & 2.07 & 2.34 & 1.42 & 1.30 \\ 
 &  & 12 & 5 &  & 5 & 5 & 5 & 5 & 3 &  & 0.00 & 0.00 & 0.00 & 0.00 & 27.12 &  & 1.28 & 0.60 & 0.90 & 0.55 & 1.07 \\ 
 &  & 16 & 5 &  & 5 & 5 & 5 & 5 & 5 &  & 0.00 & 0.00 & 0.00 & 0.00 & 0.00 &  & 0.60 & 0.33 & 0.34 & 0.37 & 0.29 \\ \cline{1-4} \cline{6-10} \cline{12-16} \cline{18-22}
 & 100 & 10 & 5 &  & 5 & 5 & 5 & 5 & 5 &  & 0.00 & 0.00 & 0.00 & 0.00 & 0.00 &  & 30.17 & 17.19 & 13.49 & 15.93 & 14.83 \\ 
 &  & 14 & 5 &  & 5 & 5 & 5 & 5 & 5 &  & 0.00 & 0.00 & 0.00 & 0.00 & 0.00 &  & 15.60 & 10.25 & 8.58 & 15.35 & 10.46 \\ 
 &  & 25 & 5 &  & 4 & 4 & 4 & 4 & 4 &  & 6.24 & 1.06 & 0.61 & 0.14 & 0.57 &  & 158.07 & 220.95 & 166.26 & 166.96 & 167.92 \\ 
 &  & 33 & 5 &  & 5 & 5 & 5 & 5 & 5 &  & 0.00 & 0.00 & 0.00 & 0.00 & 0.00 &  & 42.58 & 48.79 & 70.50 & 31.30 & 36.28 \\ \cline{1-4} \cline{6-10} \cline{12-16} \cline{18-22}
 & 150 & 15 & 5 &  & 5 & 5 & 5 & 5 & 5 &  & 0.00 & 0.00 & 0.00 & 0.00 & 0.00 &  & 32.23 & 13.05 & 21.64 & 20.45 & 12.58 \\ 
 &  & 21 & 5 &  & 5 & 4 & 4 & 5 & 5 &  & 0.00 & 1.17 & 0.10 & 0.00 & 0.00 &  & 166.67 & 172.82 & 185.70 & 77.61 & 126.11 \\ 
 &  & 37 & 5 &  & 4 & 4 & 4 & 5 & 4 &  & 5.58 & 0.05 & 0.15 & 0.00 & 0.05 &  & 237.82 & 247.20 & 260.61 & 225.36 & 211.39 \\ 
 &  & 50 & 5 &  & 4 & 4 & 4 & 4 & 4 &  & 5.24 & 0.30 & 2.88 & 0.62 & 0.87 &  & 156.58 & 153.83 & 159.03 & 145.42 & 146.85 \\ \cline{1-4} \cline{6-10} \cline{12-16} \cline{18-22}
 & 200 & 20 & 5 &  & 5 & 5 & 4 & 5 & 4 &  & 0.00 & 0.00 & 0.29 & 0.00 & 0.20 &  & 169.19 & 87.50 & 166.74 & 169.80 & 184.77 \\ 
 &  & 28 & 5 &  & 5 & 5 & 5 & 5 & 5 &  & 0.00 & 0.00 & 0.00 & 0.00 & 0.00 &  & 71.36 & 49.19 & 41.97 & 29.46 & 49.32 \\ 
 &  & 50 & 5 &  & 4 & 4 & 5 & 5 & 4 &  & 5.96 & 0.75 & 0.00 & 0.00 & 0.10 &  & 245.93 & 282.82 & 116.00 & 210.95 & 226.99 \\ 
 &  & 66 & 5 &  & 4 & 4 & 4 & 4 & 4 &  & 4.40 & 5.43 & 0.52 & 1.34 & 0.88 &  & 173.39 & 180.51 & 168.94 & 161.10 & 233.10 \\ \cline{1-4} \cline{6-10} \cline{12-16} \cline{18-22}
KIV-float & 300 & 30 & 5 &  & 2 & 2 & 2 & 3 & 2 &  & 11.80 & 4.42 & 2.88 & 0.89 & 6.92 &  & 487.59 & 507.61 & 475.35 & 416.42 & 476.94 \\ 
 &  & 42 & 5 &  & 0 & 1 & 0 & 1 & 0 &  & 23.23 & 12.98 & 10.62 & 13.67 & 22.82 &  & 600.00 & 598.77 & 600.00 & 565.02 & 600.00 \\ 
 &  & 75 & 5 &  & 1 & 1 & 1 & 1 & 1 &  & 23.69 & 9.26 & 17.61 & 22.68 & 5.46 &  & 516.02 & 525.55 & 575.60 & 503.92 & 581.75 \\ 
 &  & 100 & 5 &  & 4 & 4 & 4 & 4 & 4 &  & 5.56 & 0.23 & 1.14 & 0.80 & 0.17 &  & 192.91 & 155.79 & 166.97 & 159.29 & 172.72 \\ \cline{1-4} \cline{6-10} \cline{12-16} \cline{18-22}
 & 500 & 50 & 5 &  & 1 & 1 & 1 & 0 & 1 &  & 22.88 & 14.82 & 5.14 & 10.05 & 7.38 &  & 543.02 & 580.47 & 532.92 & 600.00 & 572.86 \\ 
 &  & 71 & 5 &  & 0 & 0 & 0 & 0 & 0 &  & 29.64 & 8.72 & 13.79 & 14.45 & 59.70 &  & 600.00 & 600.00 & 600.00 & 600.00 & 600.00 \\ 
 &  & 125 & 5 &  & 1 & 1 & 1 & 1 & 1 &  & 27.70 & 25.62 & 7.18 & 9.96 & 24.39 &  & 511.39 & 506.27 & 506.38 & 496.78 & 499.03 \\ 
 &  & 166 & 5 &  & 0 & 0 & 0 & 0 & 0 &  & 40.21 & 5.54 & 9.43 & 11.40 & 7.00 &  & 600.00 & 600.00 & 600.00 & 600.00 & 600.00 \\ \cline{1-4} \cline{6-10} \cline{12-16} \cline{18-22}
 & 1000 & 100 & 5 &  & 0 & 0 & 0 & 0 & 0 &  & 35.97 & 36.54 & 21.06 & 47.28 & 80.13 &  & 600.00 & 600.00 & 600.00 & 600.00 & 600.00 \\ 
 &  & 142 & 5 &  & 0 & 0 & 0 & 0 & 0 &  & 40.32 & 42.06 & 35.28 & 24.33 & 63.46 &  & 600.00 & 600.00 & 600.00 & 600.00 & 600.00 \\ 
 &  & 250 & 5 &  & 0 & 0 & 0 & 0 & 0 &  & 42.90 & 6.92 & 11.25 & 20.89 & 15.99 &  & 600.00 & 600.00 & 600.00 & 600.00 & 600.00 \\ 
 &  & 333 & 5 &  & 0 & 0 & 0 & 0 & 0 &  & 46.24 & 23.99 & 38.65 & 22.89 & 12.21 &  & 600.00 & 600.00 & 600.00 & 600.00 & 600.00 \\ \cline{1-4} \cline{6-10} \cline{12-16} \cline{18-22}
 & 2000 & 200 & 5 &  & 0 & 0 & 0 & 0 & 0 &  & 45.23 & 41.22 & 49.48 & 55.55 & 96.77 &  & 600.00 & 600.00 & 600.00 & 600.00 & 600.00 \\ 
 &  & 285 & 5 &  & 0 & 0 & 0 & 0 & 0 &  & 49.21 & 48.07 & 30.01 & 35.97 & 70.28 &  & 600.00 & 600.00 & 600.00 & 600.00 & 600.00 \\ 
 &  & 500 & 5 &  & 0 & 0 & 0 & 0 & 0 &  & 47.16 & 15.84 & 34.96 & 12.70 & 16.99 &  & 600.00 & 600.00 & 600.00 & 600.00 & 600.00 \\ 
 &  & 666 & 5 &  & 0 & 0 & 0 & 0 & 0 &  & 50.10 & 16.19 & 18.53 & 17.62 & 12.34 &  & 600.00 & 600.00 & 600.00 & 600.00 & 600.00 \\ \cline{1-4} \cline{6-10} \cline{12-16} \cline{18-22}
 & 3000 & 300 & 5 &  & 0 & 0 & 0 & 0 & 0 &  & 49.90 & 41.54 & 49.74 & 55.26 & 96.77 &  & 600.00 & 600.00 & 600.00 & 600.00 & 600.00 \\ 
 &  & 428 & 5 &  & 0 & 0 & 0 & 0 & 0 &  & 54.59 & 48.37 & 48.06 & 54.74 & 96.48 &  & 600.00 & 600.00 & 600.00 & 600.00 & 600.00 \\ 
 &  & 750 & 5 &  & 0 & 0 & 0 & 0 & 0 &  & 57.42 & 35.11 & 35.93 & 22.29 & 16.21 &  & 600.00 & 600.00 & 600.00 & 600.00 & 600.00 \\ 
 &  & 1000 & 5 &  & 0 & 0 & 0 & 0 & 0 &  & 58.55 & 8.96 & 9.89 & 9.77 & 9.23 &  & 600.00 & 600.00 & 600.00 & 600.00 & 600.00 \\ \cline{1-4} \cline{6-10} \cline{12-16} \cline{18-22}
\multicolumn{ 3}{c}{avg/sum} & 180 &  & 84 & 84 & 83 & {87} & 81 &  & 21.94 & {12.64} & {12.64} & 12.92 & 20.85 &  & 354.55 & 354.52 & 351.12 & {344.86} & 353.50 \\ \bottomrule
\end{tabular}
}
\label{tab:newinstfloat}
\end{table}

As can be observed in Table \ref{tab:CPCP-results}, SS solves the largest number of instances with integer distances. However, the best average gap is attained by L2. This is because SS cannot improve the upper bound if it is unable to prove optimality. Moreover, BS may spend a significant time solving one or more feasible iterations with an overestimated value of $r$, obtaining a worse overall \#opt value. The layered searches constitute an alternative between SS and BS: progressing faster than a sequential search toward the optimal solution, and avoiding CSCP-$r$ subproblems with large $r$. In terms of CPU time, the global average of each method is similar, because these values are strongly influenced by the time spent on the largest instances, where the time limit is usually reached. Still, significant effects can be observed on some subgroups of instances. For small and medium instances especially, the layered and binary searches usually prove optimality more quickly, because of their reduced number of subproblem iterations.

The results of Table \ref{tab:newinstfloat}, in the presence of floating-point distances, complete the analysis with a different perspective. In this case, the layered searches lead to better results in terms of proven optimal solutions (L4), average gaps (L2 and L3), and time (L4). This is because the number of iterations required by the sequential search is very large in comparison to the other algorithms, and the speedup related to the solution of small subproblems is not sufficient to counterbalance the number of iterations. BS is not a good option either, since it fails to find the optimal solution on a few small instances (e.g., $(n,p) = (50,12)$) or leads to large gaps (e.g., $(n,p) = (1000,100)$ or $(2000,200)$) because of large overestimates of $r$. The layered searches have much more stable performance, with smaller gaps in the wide majority of cases. 

\FloatBarrier
\section{Conclusions} \label{sec:conclusion}

In this paper, we have presented and evaluated different formulations and search algorithms for the capacitated $p$-center problem. We have proposed valid inequalities and symmetry-breaking constraints, as well as an alternative layered search strategy for the distance bound which behaves in between the traditional sequential search and binary search. Our combination of these techniques solved to proven optimality, for the first time, all the instances from the literature {with $n \leq 402$ and integer distances}, including 13 open cases. 
The inequalities, especially the disjunctive-based subset sum inequality, helped to improve the lower bound and to reduce the number of nodes explored. The layered search effectively limits the number of feasible subproblems and avoids large overestimates of the distance. It appears to be especially useful for instances with a large number of distinct distance values, a typical situation in the presence of floating-point distances, where both sequential search and binary search become inefficient.

To stimulate future research, we have also introduced a  new set of larger instances, ranging from 300 to 3000 nodes, with integer or floating-point distances. This set of instances is currently at the frontier of the solution capabilities of exact methods. Future research could focus on exact decomposition methods for the solution of the arc-flow formulation, which has been shown to lead to higher-quality bounds despite its high time consumption, or on the development of additional cutting planes and heuristic callback procedures. Since many applications arising from the machine-learning literature involve problems with thousands or even millions of data points, we also recommend the investigation of aggregation and dominance techniques, to achieve smaller gaps for very large instances. 

\section*{Acknowledgments}
\label{acknowledgments}

This research was partially supported by Conselho Nacional de Desenvolvimento Cient{\'i}fico e Tecnol{\'o}gico (CNPq/Brazil) under Grant \sloppy{GDE 201222/2014-0}, by Coordena{\c{c}}{\~a}o de Aperfei{\c{c}}oamento de Pessoal de N{\'i}vel Superior (CAPES/Brazil) under Grant PVE n. A007/2013, and MIUR/Italy under grant PRIN 2015.

\bibliography{ref}

\end{document}